\newtheorem{theorem}{Theorem}[section]
\newtheorem{lemma}[theorem]{Lemma}
\newtheorem{corollary}[theorem]{Corollary}
\theoremstyle{definition}
\newtheorem{definition}[theorem]{Definition}
\newtheorem{example}[theorem]{Example}
\theoremstyle{remark}
\newtheorem{remark}[theorem]{Remark}
\theoremstyle{algorithm}
\newtheorem{algorithm}[theorem]{Algorithm}
\def\Fq{{\mathbb F}_q}
\newcommand{\Fqm}{{\mathbb{F}_{q^m}}}
\newcommand{\Fqms}{\mathbb{F}_{q^m}^*}
\def\a{{\alpha}}
\newcommand{\Trqm}{\operatorname{Tr}}
\newcommand{\Ev}{\operatorname{Ev}}
\newcommand{\wt}{\operatorname{wt}}
\def\PP{{\mathbb P}}
\def\F{{\mathbb{F}}}
\def\X{{\mathbf{X}}}
\def\Y{{\mathbf{Y}}}
\def\Glm{{G_{\ell,m}}}
\def\G{{G_{2,m}}}
\def \Gm{{G_{2,m}}}
\def\Od{{\mathcal{O}_\delta}}
\def\Cg{{C(2, m)}}
\def\Cd {C^\delta(2,m)}
\def\a{{\alpha}}
\def\a {\alpha}
\def \Ilm {\mathbb{I}(\ell, m)}
\def \F {\mathbb{F}}
\def \Fq{\mathbb{F}_q}
\def \Fqm{\mathbb{F}_{q^m}}
\def \Clm {C(\ell, m)}
\begin{document}
	% \title[short text for running head]{full title}
	\title[Decoding Grassmann code $C(2, m)$]{The Orbit Structure of Grassmannian $G_{2, m}$ and Decoding  Grassmann code $C(2, m)$}
	% Only \author and \address are required; other information is
	% optional. Remove any unused author tags.
	% author one information
	% \author[short version for running head]{name for top of paper} 
	\author{Fernando L. Pi\~nero }
	\address{Department of Mathematics,
		University of Puerto Rico at Ponce\newline \indent
		Ponce, P.R. 00716.}
	\email{fernando.pinero1@upr.edu}
	%\thanks{The second named author was partially supported by a doctoral fellowship from the Council of Scientific and Industrial Research, India.}
	%\curraddr{}
	\author{Prasant Singh}
	\address{Department of Mathematics and Statistics,\newline \indent
		The Arctic University of Norway(UiT),\newline \indent
		Hansine Hansens veg 18, 9019 Tromsø, Norway.}
	%\thanks{The second named author was partially supported by a doctoral fellowship from the Council of Scientific and Industrial Research, India.}
	%\curraddr{}
	\email{psinghprasant@gmail.com}
	\date{\today}
	\begin{abstract}
		%We shall give a classification of minimum weight codewords for Schubert code $\C$ where the sequence $\alpha$ satisfies ${\a}_i-{\a}_{i-1}\geq 2$ for all $i$. 
%		In this article we consider the decoding problem for algebraic geometric code arising from Grassmann varieties.

		In this article, we consider the decoding problem for Grassmann codes. We focus on the case of the Grassmannian of planes in an affine space. We look at the orbit structure of Grassmannian arising from the natural action of multiplicative group of a finite field extension. We project the corresponding Grassmann code onto these orbits to obtain a few subcodes of certain Reed-Solomon code. It is interesting to see that many of these projected codes contain an information set of the parent Grassmann code. By improving the decoding radius of  Peterson's decoder for the projected subcodes, we prove that one can correct up to $\lfloor (d-1)/2\rfloor$ errors for Grassmann code, where $d$ is the minimum distance of Grassmann code. 
		
		%We use the orbit structure of this action to project Grassmann code into some Reed-Solomon code. Using Peterson's decoding algorithm, improving the efficiency of this algorithm for the projection of Grassmann code onto certain orbits, we could correct up to $\lfloor d-1/2\rfloor$ errors for Grassmann code, where $d$ is the minimum distance of this code. 
	\end{abstract}
	\maketitle
	\section{Introduction}
	Let $q$ be a prime power, and $\Fq$  a finite field with $q$ elements. Let $\ell$ and $m$ be positive integers satisfying $
	\ell\le m$. The Grassmann variety $\Glm$ is a projective  algebraic variety over $\Fq$ whose points are $\ell$-dimensional subspaces of an $m$ dimensional vector space $V$ over $\Fq$. We may assume $V=\Fq^m$. To every such projective  variety, one can associate a  linear error-correcting code in a natural way, thinking the variety as a projective system \cite[Ch 1]{TVN}. The linear code associated to the Grassmanian $\Glm$, in this way, is known as the Grassmann code and  is denoted by $\Clm$. Ryan \cite{Ryan, Ryan2} initiated the study of Grassmann codes  over the binary field. Later, Nogin \cite{Nogin} continued the study of Grassmann codes over a general finite field and proved that the Grassmann code $\Clm$ is an $[n, k, d]_q$ code, where
	\begin{equation}
	\label{eq:parameters}
	n= {m \brack \ell}_q,\; k=\binom{m}{\ell},\text{ and } d=q^{\ell(m-\ell)}.
	\end{equation}
	Here ${m \brack \ell}_q$ denotes the Gaussian binomial coefficient given by
	\[
	{m\brack \ell}_q:=\dfrac{(q^m-1)(q^{m-1}-1)\cdots(q^{m-\ell+1}-1)}{(q^\ell-1)(q^{\ell-1}-1)\cdots(q-1)}.
	\]

	Mathematicians have been studying different aspects of Grassmann codes since they were discovered. For example, the weight spectrum of Grassmann codes $C(2, m)$, $C(3, 6)$ was computed by Nogin \cite{Nogin, Nogin1}.  Kaipa and Pillai \cite{KP} continued the study and computed the weight spectrum of the code $C(3, 7)$.  Some of the initial and terminal generalized Hamming weights of $\Clm$ are also known \cite{GL, GPP, Nogin}. Furthermore, the automorphism group of $\Clm$ is quite large and fully determined  \cite{GK}. Moreover, the dual codes of Grassmann codes have also been explored and some interesting geometric properties of the minimum weight of dual Grassmann codes has been discovered.  To be precise, it has been proven \cite[Thm. 24]{BP} that the support of the minimum weight codewords of $\Clm^\perp$ consists of three points from a line in the Grassmann variety $\Glm$. Conversely, any three points on a line in $\Glm$ are the support of a minimum weight codeword of $\Clm^\perp$. Other codes associated to different subvarieties of Grassmannian have also been of great interest. Schubert codes, linear codes associated to Schubert varieties in Grassmannian are one among these classes. The study of Schubert codes was initiated by Ghorpade-Lachaud \cite{GL} and a conjecture about the minimum distance of these codes was proposed. The conjecture is known as the MDC for Schubert codes. After several attempts and proving the MDC in many special cases \cite{HC, GV, GT}, this conjecture was settled in affirmative sense \cite{X, GS}. Like Grassmann codes, the weight spectrum of Schubert codes is also known in some cases \cite{PS}.

From the above discussion it is evident that several interesting problems related to Grassmann and related codes have been studied by different researchers over last three decades but the  decoding problem for these codes has not been explored in much detail.
%Different aspects of Grassmann and Schubert codes have been studied by several mathematicians but the the decoding problem for these codes have not been explored in much detail. 
So far, no effective decoding algorithm for Grassmann codes or Schubert codes is known. In a recent work, the second named author together with P. Beelen proposed a  decoding algorithm for Grassmann codes \cite{BS}. The authors exploited the geometric structures of Grassmann varieties to propose a decoder for Grassmann codes. In brief, they used geodesics in Grassmannians between distinct points of Grassmannian to construct certain orthogonal parity checks for Grassmann codes.  Then, a majority voting decoder is used to correct the errors. 
%geodesics in Grassmannians between two points of Grassmannian were used to construct certain orthogonal parity checks for Grassmann codes and a majority logic decoder was proposed using these parity checks.
 But the proposed majority voting algorithm is not effective, since the proposed decoder can asymptotically correct approximately $d/2^{\ell+1}$ errors. In other words, even in the simplest cases, the decoder could not correct, asymptotically,  up to $\lfloor (d-1)/2\rfloor$ errors. Moreover, the second named author  extended  \cite{Prasant} the majority voting decoder for Grassmann codes $C(\ell, m)$  to Schubert codes corresponding to Schubert varieties in Grassmannian $\G$. Interestingly, in some cases, the proposed decoder for Schubert code is effective but in most of the cases it is not. Therefore, the problem of proposing an effective decoder for Grassmann and Schubert code is still open, even in the simplest case such as codes associated to Grassmannian $\G$ and Schubert varieties in $\G$.

	In this article, we study the decoding problem for Grassmann code $\Cg$. We consider the action of the cyclic group $\Fqms$ onto $\Gm$, thinking points of $\Gm$ as ordered pair of elements in $\Fqm$, and study the orbits of this action. We see that the projections of Grassmann code $C(2, m)$ onto these orbits are subcodes of certain Reed-Solomon codes. Moreover, most of these projections contain an information set of $C(2, m)$. We use such subcodes and Peterson's decoding algorithm to correct the errors. As a consequence, we are able to correct $\lfloor( d-1)/2\rfloor$ errors for the Grassmann code $C(2, m)$ where $d$ is the minimum distance of the code.

	\section{Preliminaries}
	In this section, we recall the  definition of Grassmann varieties and the construction of Grassmann codes. 
	Throughout the article, positive integers $\ell, m$ satisfying $\ell\le m$, and  finite field $\Fq$ with $q$ elements are fixed. The set $\Ilm$ is defined as 	\[
	\Ilm:=\{u=(u_1,u_2,\ldots, u_\ell)\in \mathbb{Z}^\ell: 1\le u_1<u_2<\cdots<u_\ell\le m\}
	\]
	and fix a linear order on $\Ilm$. Let $V$ be an $m$-dimensional vector space over $\Fq$. The Grassmannian $G_\ell(V)$ of all $\ell$-planes of vector space $V$ is defined by
	\[
	G_\ell(V):=\{P: P\subset V\text{ is a linear subspace and }\dim P=\ell\}.
	\]
	The Grassmannian $G_\ell(V)$ can be embedded into a projective space via the Pl\"ucker map. More precisely, let $\mathcal{B}$ be an ordered basis of $V$. For $P\in G_\ell(V)$, let $M_P$ be an $\ell\times m$ matrix whose rows are coordinates of some basis of $P$ with respect to  $\mathcal{B}$. The Pl\"ucker map $\mathrm{Pl}$ is defined by:
	\begin{equation}
	\label{eq: plucker}
	\mathrm{Pl}:G_\ell(V)\to \mathbb{\PP}^{{m\choose\ell}-1} \text{ defined by }P\mapsto [(M_P)_u]_{u\in\Ilm}
	\end{equation}
	where $(M_P)_u$ denotes the $\ell\times\ell$ minor of $M_P$ corresponding to columns of $M_P$ indexed by tuple $u$. The image of the Pl\"ucker map is given by the zero set of a bunch of quadratic polynomials known as Pl\"ucker polynomials and hence defined a projective algebraic variety, known as Grassmann variety. For a detailed study on Grassmann varieties we refer to \cite{KL, Manivel}. It is known that if $V$ and $V'$ are two vector spaces of dimension $m$ over $\Fq$, then there exists an automorphism of $\mathbb{\PP}^{{m\choose\ell}-1}$ mapping  $G_\ell(V)$ to $G_\ell(V')$. Therefore, for the rest of the article, we denote by $\Glm$, the Grassmannian of all $\ell$-planes of $\Fq^m$. Using the Pl\"ucker map, we may think $\Glm$ as a subset of the projective space $\mathbb{\PP}^{{m\choose\ell}-1}$ over finite field. To every projective algebraic variety defined over a finite field, one can associate a linear code using the language of projective systems  \cite[Ch.1]{TVN}. More precisely, each nondegenerate subset $\mathcal P$ of a projective space $\PP^{k-1}$ over $\Fq$ corresponds to a unique linear code. Further, the minimum distance and the generalized Hamming weights of the corresponding code can be studied from hyperplane sections of $\mathcal P$ with linear subspaces of $\PP^{k-1}$. Therefore, Grassmannian $\Glm$ corresponds to a linear code. The code associated to $\Glm$ in this way is known as the Grassmann code and is denoted by $\Clm$.  To go into more details, we would like to recall the construction of Grassmann code. 
	
	Let $\mathbf{\underline{X}}=(X_{ij})$ be an $\ell\times m$ matrix of $\ell m$ indeterminates $X_{ij}$ over $\Fq$. For every $u\in\Ilm$, let $\mathbf{X}_u$ denote the $\ell\times\ell$ minor of $\mathbf{X}$ corresponding to columns labeled by $u$. Let $\Fq[\X_u]_{u\in\Ilm}$ be the $\Fq$ linear space spanned by all minors $\X_u$. For each $P\in\Glm$, let $M_P$ be a matrix corresponding to point $P$ as in equation \eqref{eq: plucker}, and let $\{M_{P_1},\dots,M_{P_n}\}\subset \Fq^{\ell\times m}$ be  a set of such matrices corresponding to each point $P_i\in\Glm$ in some fixed order, where $n=|\Glm|$. Consider the evaluation map
	\begin{equation*}
	\label{eq:constGrass}
	\Ev: \Fq[\X_u]_{u\in\Ilm}\to\Fq^n \text{ defined by }f(\X_u)\to(f(M_{P_1}),\dots,f(M_{P_n})).
	\end{equation*}
	The image of the evaluation map $\Ev$ is known as the Grassmann code and is denoted by $\Clm$. The Grassmann code $\Clm$ is an $[n,k,d]_q$ linear code where $n, k$ and $d$ are given by equation~\eqref{eq:parameters}. Clearly, the codewords of  Grassmann code $\Clm$ are indexed by points of $\Glm$. Therefore, we may use points $P_1,\ldots, P_n\in\Glm$ as an indexing set for the coordinates of codewords in $\Clm$.

	In this article, we only study the Grassmann code $C(2, m)$. We write a $2\times m$ generic matrix as
	$$
	\mathbf{\underline{X}}=
	\begin{bmatrix}
	X_1 & X_2 & \cdots & X_m \\
	Y_1 & Y_2 & \cdots & Y_m 
	\end{bmatrix}
	$$
	and write the first row of the indeterminate matrix $\underline\X$ as $\X$  and the second row of $\underline\X$ as $\Y$. In the next section, we will study the orbit structure of Grassmannian $G_{2, m}$ under the natural action induced by the cyclic group $\Fqms$ but before we get into the orbit structure, we recall  the definition of the \emph{ trace}  function of field extensions.
	
	%we will show that certain Grassmann codes $C(2,m)$ are quasi cyclic. To do so we shall  recall that every non-singular $m\times m$ matrix in $GL_m(\Fq)$ permute points of $\Glm$ and hence give rise to an automorphism of $C(\ell, m)$. To understand the full automorphism group $\Aut(\Clm)$  we refer to \cite{GK}. The last tool that we need for next sections is the \emph{ Trace} function of an extension field. 
	Let $\Fqm$ be the field extension of $\Fq$ of degree $m$.
	\begin{definition}
		The \emph{trace function} of $\Fqm$ over $\Fq$ is defined and denoted by
		$$
		\Trqm_{\Fqm/\Fq}(x) = x + x^q+x^{q^2} + \cdots + x^{q^{m-1}}.
		$$
		If the fields $\Fqm$ and $\Fq$ are clear from the context, we drop the index and denote the trace map by $\Trqm$.
	\end{definition}
	Note that $\Fqm$ is an $m$ dimensional vector space over $\Fq$ and $\Trqm$ is an $\Fq$ linear map.  Trace functions will play a key role in our decoder.
	%Trace functions are fundamental in our proposed decoding algorithm for $C(2, m)$.
	% Hilbert's theorem give us
	% $\Trqm(x)$ has all of its  $q^{m-1}$ distinct roots in $\Trqm$.
	%\begin{proposition}[Hilbert's theorem 90]
	%Let $z \in \Fqm$. Then, $Tr_{\Fqm/\Fq}(z) = 0$ if and only if $z = u-u^q$ for some $u \in \Fqm$.
	%\end{proposition}
	
	\section{The Orbit Structure of Grassmannian  $G_{2, m}$}
	
	In this section, we study the natural action of the cyclic group of $\Fqms$ on Grassmannian $G_{2,m}$. Our goal is to understand the orbits of $\G$ under this action and the behavior of the projection of the code $C(2,m)$ onto these orbits.  Before going into any further details, we shall fix some notations that will be used throughout the article. As earlier, let $\Fqm$ be the field extension of $\Fq$ of degree $m$. We know that $\Fqm$ and $\Fq^m$ are isomorphic as  an $\Fq$ vector space. We fix an isomorphism between $\Fq^m$ and $\Fqm$. Using this isomorphism, we may think of $\Gm$ as a subset of $\mathbb{F}_{q^m}^*\times\Fqms$ consisting of tuples $\langle\a,\beta\rangle_{\Fq}$, where $\a,\;\beta\in\Fqms$ span a two dimensional subspace over $\Fq$. 
	%In other words, we think $\mathbb{F}_{q^m}^*\times\Fqms$ as the set of  $2\times m$ matrices over $\Fq$ of rank one and two and $\Gm$ is, up to equivalence, the set of those matrices that are of rank two. 
	For $\a\in\Fqm$, we denote by $(\a_1,\ldots,\a_m)\in\Fq^m$, the coordinates of $\a$ and vice-versa. Furthermore, we treat the subspace $\langle\a,\beta\rangle_{\Fq}$ as the point of $\Gm$ spanned by coordinates $(\a_1,\ldots,\a_m)$ and $(\beta_1,\ldots,\beta_m)$. 
Recall the following trivial lemma:
	
	\begin{lemma}
		\label{lemma:trace}
		The map $T: \Fqm\times \Fqm\to \Fq$ defined by $T(\a, X)= \Trqm(\a X)$ is a non-degenerate, $\Fq$-bilinear map. In particular, 
		for every $\alpha \in \Fqms$, the map  $X \mapsto Tr_{\Fqm/\Fq}(\a X)$ is a nonzero $\Fq$-\emph{linear functional} of $V = \Fqm$.
	\end{lemma}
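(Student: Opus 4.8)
The plan is to separate the two assertions — $\Fq$-bilinearity and non-degeneracy — and then observe that the ``in particular'' clause is an immediate corollary of non-degeneracy. For bilinearity, I would simply note that the multiplication map $\Fqm\times\Fqm\to\Fqm$, $(\a,X)\mapsto\a X$, is $\Fq$-bilinear, and that $\Trqm\colon\Fqm\to\Fq$ is $\Fq$-linear (as already recorded just after the definition of the trace); composing a bilinear map with a linear map in one slot at a time again yields a bilinear map. Concretely, $\Trqm\big((a\a+a'\a')X\big)=a\,\Trqm(\a X)+a'\,\Trqm(\a' X)$ for $a,a'\in\Fq$, and symmetrically in the second variable, so $T$ is $\Fq$-bilinear with no computation beyond this.

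For non-degeneracy the key point is that the trace functional $\Trqm$ is not the zero map on $\Fqm$. I would argue this by a degree/counting estimate: viewed as a polynomial, $\Trqm(x)=x+x^q+\cdots+x^{q^{m-1}}$ has degree $q^{m-1}$, which is strictly less than $q^m=|\Fqm|$, and a nonzero polynomial over a field has at most as many roots as its degree; hence $\Trqm$ cannot vanish identically on $\Fqm$. (One could equally invoke the standard fact that the trace of a finite, hence separable, extension is surjective onto the base field.) Granting this, fix $\a\in\Fqms$. Since $\a\neq 0$, the map $X\mapsto\a X$ is a bijection of $\Fqm$, so $\{\a X: X\in\Fqm\}=\Fqm$, and therefore there exists $X_0$ with $\Trqm(\a X_0)\neq 0$, i.e.\ $T(\a,X_0)\neq 0$. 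Because $T(\a,X)=T(X,\a)$, the same argument with the variables exchanged shows $T$ is non-degenerate in both arguments.

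The ``in particular'' statement then follows at once: for fixed $\a\in\Fqms$ the map $X\mapsto\Trqm(\a X)$ is $\Fq$-linear by the bilinearity of $T$, and it is not identically zero by the non-degeneracy just established, so it is a nonzero $\Fq$-linear functional on $V=\Fqm$. There is no genuine obstacle in this lemma; the only step that is not purely formal is the non-vanishing of the trace, and I would dispatch it with the polynomial-degree bound above (or a citation to a standard reference on finite fields) rather than reproving surjectivity of the trace from first principles.
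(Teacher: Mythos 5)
Your proof is correct. The paper itself supplies no argument for this lemma (it is merely ``recalled'' as a trivial standard fact), and your write-up is the standard one: bilinearity from composing the $\Fq$-bilinear multiplication on $\Fqm$ with the $\Fq$-linear trace, and non-degeneracy from the non-vanishing of $\Trqm$ (via the degree bound $q^{m-1}<q^m$ on the polynomial $x+x^q+\cdots+x^{q^{m-1}}$) together with the fact that multiplication by $\a\in\Fqms$ is a bijection of $\Fqm$.
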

	Note that, the trace map is $\Fq$-linear in both $X$ and $\a$. Therefore, for every $X\in\Fqm$ and  $X_i\in\Fq$, there exist some $\a\in\Fqm$ such that $Tr_{\Fqm/\Fq}(\a X)=X_i$. Furthermore, if $X_i\in\Fq$ is a coordinate of $X\in\Fqm$ (via isomorphism treating it in $\Fq^m$), then there exist some $\a\in\Fqms$ such that $Tr_{\Fqm/\Fq}(\a X)=X_i$. This plays a very important role. The next lemma is an immediate consequence of Lemma \ref{lemma:trace}.  
	
	\begin{lemma}
		\label{lemma:determinant}
		Let $\alpha, \beta \in \Fqms$ and let $W =  \Fqm \times \Fqm$ be the $\Fq$ vector space. Then functions of the form  
		$$
		f_{\alpha, \beta}(X,Y) =  \Trqm(\alpha X) \Trqm(\beta Y) - \Trqm(\alpha Y) \Trqm(\beta X) 
		$$ 
		are determinantal functions on $W$ as a vector space over $\Fq$.
	\end{lemma}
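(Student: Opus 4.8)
I would first pin down what \emph{determinantal function} means in this setting: via the fixed isomorphism $\Fqm\cong\Fq^m$, a point $(X,Y)\in W=\Fqm\times\Fqm$ is nothing but the $2\times m$ matrix $\underline{\X}$ with first row $\X=(X_1,\dots,X_m)$ and second row $\Y=(Y_1,\dots,Y_m)$, and a function on $W$ is determinantal exactly when it lies in the $\Fq$-span $\Fq[\X_u]_{u\in\Ilm}$ of the $2\times 2$ minors $\X_u=X_iY_j-X_jY_i$ (equivalently, its restriction to $\Gm$ is the evaluation of a codeword of $\Cg$). So the plan is simply to write $f_{\alpha,\beta}$ out explicitly as such a linear combination.

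The key step is an application of Lemma~\ref{lemma:trace}: for fixed $\alpha\in\Fqms$ the map $Z\mapsto\Trqm(\alpha Z)$ is a (nonzero) $\Fq$-linear functional on $V=\Fqm\cong\Fq^m$, so there are scalars $a_1,\dots,a_m\in\Fq$, depending only on $\alpha$, with $\Trqm(\alpha Z)=\sum_{i=1}^m a_iZ_i$ whenever $Z$ has coordinate vector $(Z_1,\dots,Z_m)$; likewise write $\Trqm(\beta Z)=\sum_i b_iZ_i$. The one small point that needs care is that the \emph{same} coefficient vectors $a=(a_i)$ and $b=(b_i)$ govern both arguments, because it is literally the same linear functional applied to the $X$-part and to the $Y$-part of $(X,Y)\in W$; hence $\Trqm(\alpha X)=\sum_i a_iX_i$, $\Trqm(\alpha Y)=\sum_i a_iY_i$, and similarly with $b$ for $\beta$.

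Substituting into the definition and expanding the product of sums gives
\[
f_{\alpha,\beta}(X,Y)=\Bigl(\sum_i a_iX_i\Bigr)\Bigl(\sum_j b_jY_j\Bigr)-\Bigl(\sum_i a_iY_i\Bigr)\Bigl(\sum_j b_jX_j\Bigr)=\sum_{i,j}a_ib_j\,(X_iY_j-X_jY_i),
\]
and then grouping the $(i,j)$ and $(j,i)$ contributions for $i<j$ (the diagonal terms vanishing) yields
\[
f_{\alpha,\beta}(X,Y)=\sum_{1\le i<j\le m}(a_ib_j-a_jb_i)\,(X_iY_j-X_jY_i)=\sum_{u\in\Ilm}c_u\,\X_u,\qquad c_u=a_ib_j-a_jb_i \text{ for } u=(i,j).
\]
This exhibits $f_{\alpha,\beta}$ as an explicit $\Fq$-linear combination of the $2\times 2$ minors, proving the lemma. (Equivalently and more conceptually, $f_{\alpha,\beta}(X,Y)=\det(\underline{\X}\,N)$, where $N$ is the $m\times 2$ matrix with columns $a^{T}$ and $b^{T}$; the Cauchy--Binet formula then expands this determinant as $\sum_{u\in\Ilm}(\underline{\X}_u)(N_u)$, recovering the same coefficients $c_u$ as the $2\times 2$ minors of $N$.)

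I do not anticipate any genuine obstacle here: the statement is essentially a bookkeeping identity, and the only things to get right are the observation in the second paragraph (one linear functional, one coefficient vector, used in both slots) and the antisymmetrization in the last display. The substance of the lemma is precisely that the quadratic form $f_{\alpha,\beta}$ is assembled out of Plücker-type $2\times 2$ determinants, which the computation above makes manifest, and this is what lets us later view projections of such functions as codewords of $\Cg$.
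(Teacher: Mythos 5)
Your proof is correct and follows exactly the route the paper intends: the paper states the lemma as an immediate consequence of Lemma~\ref{lemma:trace} without writing out a proof, and your coordinate expansion $f_{\alpha,\beta}=\sum_{i<j}(a_ib_j-a_jb_i)(X_iY_j-X_jY_i)$ is precisely the computation being left implicit. Your reading of ``determinantal'' as lying in the $\Fq$-span of the $2\times 2$ minors is equivalent to the paper's gloss (``alternating bilinear map''), since those minors span the alternating bilinear forms, so nothing is missing.
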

	What we mean is that the function $f_{\alpha, \beta}(X,Y)$ is an alternating bilinear  map on $W$. Therefore, $2\times 2$ minors $X_iY_j-X_jY_i$ of the $2\times m$ matrix $\underline\X$, can be written as $\Fq$-linear combination of functions $	f_{\alpha, \beta}(X,Y)$ for $\alpha, \beta \in \Fqms$. In other words, one can think of Grassmann code as evaluation functions $\sum_{a\in\Fq}\sum_{\a,\beta}af_{\alpha, \beta}(X,Y)$ on Grassmannian $\Gm$ as a subset of $\Fqms\times\Fqms$. We will return to these functions and their evaluations in the next section.
	%Next, we recall the notion of quasi cyclic code. Let $C$ be an $[n,k]_q$-linear code and let $T:C\to \Fq^n$ be the map defined by $T(c_0,c_1,\ldots, c_{n-1})=(c_{n-1}, c_0,\ldots, c_{n-2})$.
	%
	%\begin{definition}
	%	A linear code $C$ of length $n$  over $\Fq$  is called
	%	a quasi-cyclic code of orbit size $r$ if there exist some $r$ such that  $T^r: C\to C$ gives an automorphism of the code.
	%\end{definition}
	
	Now we are ready to look at the natural group action of $\Fqms$ on Grassmannian $\Gm$. First, we shall define the action of $\Fqms$ on  $\Gm$.

	\begin{definition}
			\label{def:grpaction}
		Let $\gamma$ be a generator of the cyclic group $\Fqms$. The action of $\Fqms$ on Grassmannian $\Gm$ is defined by
	\begin{equation}
	\label{eq:grpaction}
 \gamma^i\cdot\langle \alpha, \beta \rangle_{\Fq}= \langle \gamma^i\alpha, \gamma^i\beta \rangle_{\Fq}.
	\end{equation}
\end{definition}
 For the rest of the article, we fix this action of $\Fqms$ on $\Gm$ and whenever we talk about group action on $\Gm$ or orbits of $\Gm$, we always mean the action defined in Definition \ref{def:grpaction}.	Let $O_1,\ldots, O_r$ be all the orbits of $\Gm$. Therefore, if $\langle \alpha, \beta \rangle_{\Fq}\in O_j$ is an arbitrary element of orbit $O_j$ then $O_j=\{\langle \gamma^i\alpha, \gamma^i\beta \rangle_{\Fq}: 1\le i\le q^m-1\}$. Since $\gamma$ generates $\Fqms$ and for  any $\langle \alpha, \beta \rangle_{\Fq}\in \Gm$, we have $\a\in\Fqms$, therefore, we may assume that each orbit has an element of the form $\langle 1, \delta \rangle_{\Fq}$. We denote the orbit containing $\langle 1, \delta \rangle_{\Fq}$ as $O_\delta$ and we call the element $\langle 1, \delta \rangle_{\Fq}$ an orbit representative of $O_\delta$.
	
	\begin{example}
		\label{example:counting}
		Assume  $m=4$ and $q=2$. Let $\gamma$ be such that $\gamma^4 = \gamma+1$. In this case $\mathbb{F}_{16} = \mathbb{F}_2[\gamma]$ and $\langle \gamma \rangle = \mathbb{F}_{16}^*$. Note that if $\langle 1, \a \rangle_{\Fq}$ is an orbit representative of the orbit $O_\a$, then  $\langle 1, \a \rangle_{\Fq}=\{0, 1, \a, 1+\a\}$. By removing $0$ we may just write this set as $\{ 1,\a, \a +1\}$. Further, as $\gamma$ is a generator of the field $\mathbb F_{16}$, we have $\a=\gamma^i$ and $1+\a=\gamma^j$ for some $i$ and $j$. As the subspace  $\langle 1, \alpha \rangle_{\Fq}$ is of dimension $2$, we get $\alpha \neq 0, 1$. This leaves $16-2 = 14$ possibilities for $\a$. Further, each of those $\alpha$ and $\alpha+1$ generates the same space. Therefore, there are $7$ different spaces of the form $\langle 1, \alpha \rangle$. These are $\{ 1, \gamma, \gamma^{4} \}$, $\{ 1, \gamma^{2}, \gamma^{8} \}$, $\{ 1, \gamma^{14}, \gamma^{3} \}$, $\{ 1, \gamma^{5}, \gamma^{10} \}$, $\{ 1, \gamma^{13}, \gamma^{6} \}$, $\{ 1, \gamma^{7}, \gamma^{9} \}$, and $\{ 1, \gamma^{11}, \gamma^{12} \}$. Since the action of $\gamma$ maps $\{ 1, \gamma^i, \gamma^j  \}$ to $\{ \gamma, \gamma^{i+1}, \gamma^{j+1}\}$.
		%We may represent the space  $\{ 1, \gamma^i, \gamma^j  \}$ as triplet $\{0, i, j\}$ and the space after $\gamma$ action  as the triplet $\{1, i+1, j+1\}$. We consider this action on possible orbit representatives $\{ 0,1,4 \}$, $\{ 0, 2, 8 \}$, $\{ 0, 3, 14 \}$, $\{ 0, 5, 10 \}$, $\{0,6, 13 \}$, $\{ 0,7, 9 \}$ and $\{ 0, 11, 12 \}$ modulo $15$. 
		A direct computation shows that there are three orbits, namely: The orbits  $O_{\gamma}$, $O_{\gamma^2}$ and $O_{\gamma^5}$. The orbit $O_{\gamma}$ has 15 elements and contains orbit representatives $\{ 1,\gamma, \gamma^4 \},\;\{ 1, \gamma^3, \gamma^{14} \}$, and $\{ 1, \gamma^{11}, \gamma^{12} \}$.  The orbit $O_{\gamma^2}$ also has  15 elements and it  contains orbit representatives  $\{1, \gamma^2, \gamma^8\}\; \{ 1, \gamma^{6}, \gamma^{13} \}$, and $\{ 1, \gamma^{7}, \gamma^{9} \}$. Finally the orbit $O_{\gamma^5}$  has 5 elements and it has only one orbit representative, namely $\{1, \gamma^5, \gamma^{10}\}$. Note that this gives in total 35 spaces, i.e. full Grassmannian $G_{2,4}$.

		The Grassmannian $G_{2,5}$ is represented as elements of $\mathbb{F}_{32}$. Here, the field $\mathbb{F}_{32}$ has $31$ nonzero elements. If $\langle 1, \a \rangle_{\Fq}$ is an orbit representative of the orbit $O_\a$, then there are 30 choices for $\a$. Further, since $\a$ and $1+\a$ generate the same space, we have only 15 choices of the subspace  $\langle 1, \a \rangle_{\Fq}$. Furthermore, if $\gamma$ is a generator of $\mathbb{F}_{32}^*$, then the action of $\gamma$ does not fix any elements of $G_{2,5}$.  Thus, all orbits will have size $31$ and hence there are $15$ orbits of size $31$. Likewise, the Grassmannian $G_{2,7}$ represented as elements of $\mathbb{F}_{128}$  has $63$ orbits of size $127$.
	\end{example}
	In the next two lemmas we will understand why in the case of $m=5$ and $m=7$, the orbits structure of $\Gm$ are quite uniform. 
	
	\begin{lemma}\label{lemma:cyclefix}
		Let $P = \langle 1, \alpha \rangle_{\Fq} $ be a two dimensional  $\F_q$-linear subspace of $\mathbb{F}_{q^m}$. Suppose that $\beta \in \mathbb{F}_{q^m}\setminus  \F_q$. Then $\beta P = P$ if and only if $\alpha \in \F_{q^2}^*$, i.e. $P=\F_{q^2}$.
		
	\end{lemma}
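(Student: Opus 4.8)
The plan is to analyze the condition $\beta P = P$ directly in terms of the multiplication-by-$\beta$ map on $\Fqm$. First I would observe that $\beta P = P$ means the $\Fq$-linear map $\mu_\beta\colon x\mapsto \beta x$ stabilizes the two-dimensional subspace $P = \langle 1,\alpha\rangle_{\Fq}$. In particular $\beta\cdot 1 = \beta \in P$, so we may write $\beta = a + b\alpha$ with $a,b\in\Fq$; and since $\beta\notin\Fq$ by hypothesis, $b\neq 0$. Similarly $\beta\alpha\in P$, so $\beta\alpha = c + e\alpha$ for some $c,e\in\Fq$. Substituting $\beta = a+b\alpha$ into $\beta\alpha = c+e\alpha$ gives $(a+b\alpha)\alpha = c + e\alpha$, i.e. $b\alpha^2 = c + (e-a)\alpha$, hence $\alpha^2 = \frac{e-a}{b}\alpha + \frac{c}{b}$, a nontrivial quadratic relation over $\Fq$ satisfied by $\alpha$. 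Since $\alpha\notin\Fq$ (as $\dim_{\Fq}\langle 1,\alpha\rangle = 2$), the minimal polynomial of $\alpha$ over $\Fq$ has degree exactly $2$, so $\Fq(\alpha) = \F_{q^2}$, and therefore $\alpha\in\F_{q^2}^*$; moreover $P = \langle 1,\alpha\rangle_{\Fq} = \F_{q^2}$ as the unique degree-$2$ subextension contained in $\Fqm$ (which exists as a subset of $\Fqm$ precisely because such a quadratic relation forces $2\mid m$ implicitly through $\alpha\in\Fqm$).

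Conversely, suppose $\alpha\in\F_{q^2}^*\setminus\Fq$, so that $P = \langle 1,\alpha\rangle_{\Fq} = \F_{q^2}$. Then for any $\beta\in\F_{q^2}$ we have $\beta P = \beta\F_{q^2} \subseteq \F_{q^2} = P$, and since $\beta\neq 0$ (indeed $\beta\notin\Fq\subseteq\Fqm$ is allowed, but certainly $\beta$ is a unit once we pick $\beta\in\F_{q^2}^*$), multiplication by $\beta$ is a bijection on $\F_{q^2}$, giving $\beta P = P$. I would phrase the converse so as to produce the specific $\beta\in\Fqm\setminus\Fq$ required by the statement — e.g. $\beta = \alpha$ itself works, since $\alpha\in\F_{q^2}^*\setminus\Fq\subseteq\Fqm\setminus\Fq$ and $\alpha\F_{q^2} = \F_{q^2}$. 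This closes the equivalence.

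I do not expect any serious obstacle here; the lemma is essentially a linear-algebra computation. The one point requiring a little care is the implicit claim that $\F_{q^2}$ sits inside $\Fqm$ whenever the situation arises — but this is automatic, because the hypothesis $\beta P = P$ with $\beta\notin\Fq$ already produces, via the argument above, an element $\alpha\in\Fqm$ satisfying an irreducible quadratic over $\Fq$, which forces $2\mid m$ and hence $\F_{q^2}\subseteq\Fqm$; no separate assumption is needed. The other minor bookkeeping item is to make sure the roles of "$\beta$ stabilizes $P$" and "$\beta\in P$" are not conflated: stabilizing $P$ is a priori stronger, but the computation shows that once $\beta\in P\setminus\Fq$ and $P$ is closed under multiplication (which the quadratic relation guarantees), the two coincide. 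With these observations in place the proof is short, and I would present it in the two directions sketched above.
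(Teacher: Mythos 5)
Your proof is correct and follows essentially the same route as the paper: both extract a nontrivial quadratic relation for $\alpha$ over $\Fq$ from the condition $\beta P = P$ together with $\beta\notin\Fq$, and conclude $\alpha\in\F_{q^2}$, with the converse handled by noting $P=\F_{q^2}$ is multiplicatively stable. The only (immaterial) difference is that you use $\beta\alpha\in P$ to get the quadratic, whereas the paper uses $1\in\beta P$ (i.e.\ $\beta^{-1}\in P$); your variant is in fact marginally cleaner since the leading coefficient $b$ is already known to be nonzero.
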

	\begin{proof}
		Let $P=\langle 1, \alpha \rangle_{\Fq}= \{ a + b\alpha  \ | \ a,b \in \F_q \}$ be as in the hypothesis and let $\beta \in \mathbb{F}_{q^m}\setminus\Fq$  such that $\beta P = P$.  Then, as $\beta \in \beta P$ and $\beta P = P$, we have  $\beta = a + b \alpha$ for some $a,b \in \F_q$. Since $\beta \not \in \Fq$ we have $b \neq 0$. Also, as $1 \in \beta V$ there exist $c, d \in\F_q$ such that $1 = (c+d\alpha)\beta$.  Putting the value of $\beta$, we get $1 = (c+d\alpha)(a+b\alpha)$. Therefore, $1 = bd\alpha^2 + (ad+bc)\alpha+cd$ and hence $\alpha$ satisfies a polynomial equation over $\F_q$ of degree $2$.  It follows that $\alpha \in \F_{q^2}$. For the reverse implication, note that if $\alpha \in \F_{q^2} \setminus \F_q$, then $P = \F_{q^2}$ and in this case, clearly $\alpha P = P$.

	\end{proof}
	In the next lemma, we consider $\G$ when $m$ is odd and count the number of orbits in $\G$ under the action defined in equation \eqref{eq:grpaction} and compute the size of each orbit.
	\begin{lemma}\label{lemma:oddcycle}
		If $m$ is an odd integer, then there are $\frac{q^{m-1}-1}{q^2-1}$ orbits and   the size of each orbit is $\frac{q^m-1}{q-1}$.
	\end{lemma}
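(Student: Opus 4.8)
The plan is to compute the orbit sizes via the orbit--stabilizer theorem and then recover the number of orbits by dividing $|\Gm|$ by the common orbit size. So the real work is to pin down, for an arbitrary orbit representative $P=\langle 1,\delta\rangle_{\Fq}\in\Gm$, the stabilizer of $P$ in $\Fqms$ under the action of Definition~\ref{def:grpaction}. First I would note that the scalar subgroup $\Fq^*\subseteq\Fqms$ always fixes $P$: for $c\in\Fq^*$ we have $c\cdot\langle 1,\delta\rangle_{\Fq}=\langle c,c\delta\rangle_{\Fq}=\langle 1,\delta\rangle_{\Fq}$, since rescaling a basis does not change the $\Fq$-span. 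Hence $\Fq^*$ lies in the stabilizer of every point, and the heart of the lemma is that for odd $m$ it is the \emph{whole} stabilizer.

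For the reverse inclusion, suppose $\gamma^i\in\Fqms$ fixes $P$ and set $\beta:=\gamma^i$. If $\beta\in\Fq$ we are done; otherwise $\beta\in\mathbb{F}_{q^m}\setminus\Fq$ and $\beta P=P$, so Lemma~\ref{lemma:cyclefix} forces $\delta\in\F_{q^2}^*$, i.e.\ $P=\F_{q^2}$. But $\F_{q^2}$ is a subfield of $\Fqm$ only when $2\mid m$, contradicting that $m$ is odd. Therefore the stabilizer of every point of $\Gm$ equals $\Fq^*$, which has order $q-1$, and the orbit--stabilizer theorem gives that every orbit has size $|\Fqms|/(q-1)=(q^m-1)/(q-1)$; in particular all orbits have the same size.

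Since $\Gm$ is the disjoint union of its orbits, the number $r$ of orbits satisfies $r\cdot\dfrac{q^m-1}{q-1}=|\Gm|={m\brack 2}_q=\dfrac{(q^m-1)(q^{m-1}-1)}{(q^2-1)(q-1)}$ by~\eqref{eq:parameters}, whence $r=\dfrac{q^{m-1}-1}{q^2-1}$, as claimed. I do not anticipate a genuine obstacle: once Lemma~\ref{lemma:cyclefix} is in hand, the argument is pure bookkeeping. The only subtleties worth flagging are that the action of $\Fqms$ on $\Gm$ is \emph{not} free — the subgroup $\Fq^*$ acts trivially — so one must work with the stabilizer $\Fq^*$ rather than a trivial stabilizer; and, as an internal consistency check, $\dfrac{q^{m-1}-1}{q^2-1}$ is indeed an integer precisely because $2\mid m-1$ for odd $m$, so $q^2-1\mid q^{m-1}-1$.
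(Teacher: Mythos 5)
Your proof is correct and follows essentially the same route as the paper: use Lemma~\ref{lemma:cyclefix} to show that for odd $m$ the stabilizer of every point is exactly $\Fq^*$, apply the orbit--stabilizer theorem to get the common orbit size $(q^m-1)/(q-1)$, and divide $|\Gm|$ by that size to count the orbits. Your additional remarks (verifying $\Fq^*$ fixes every point, and the divisibility check $q^2-1\mid q^{m-1}-1$) are sound and only make the argument more explicit.
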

	\begin{proof}
		
		The proof is a simple consequence of the orbit-stabilizer theorem and Lemma \ref{lemma:cyclefix}. Since $m$ is odd, there does not exist any $\a\in\Fqm\setminus \Fq$ such that $\a\in\mathbb{F}_{q^2}$. Therefore, from lemma \ref{lemma:cyclefix}, we know that for any $P\in\Gm$, the stabilizer of $P$ has size $(q-1)$, namely elements of $\Fq^*$ . Now from the orbit-stabilizer theorem, we get that the orbit of each of $P$ is of size $\frac{q^m-1}{q-1}$. Further, as $\Gm$ is the disjoint union of orbits of $P\in\Gm$, the number of orbits is $|\Gm|/|O_\delta|$, where $O_\delta$ is an arbitrary orbit. As a result, we get the total number of orbits in $\Gm$ are $\frac{q^{m-1}-1}{q^2-1}$.
	\end{proof}	
	This lemma justifies the nature of orbits of $\Gm$ in  cases $m=5,\;7$ that we discussed in the Example \ref{example:counting}. The next lemma counts the size of orbits and total number of orbits in $\Gm$ when $m$ is even.
	
	\begin{lemma}
		\label{lemma:evencycle}
		If $m$ is even, then there are  $q\frac{q^{m-2}-1}{q^2-1}$ orbits of size $\frac{q^m-1}{q-1}$ and exactly one orbit of size $\frac{q^m-1}{q^2-1}$.
	\end{lemma}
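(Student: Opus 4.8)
The plan is to run the orbit--stabilizer argument exactly as in Lemma~\ref{lemma:oddcycle}, but now accounting for the one extra point that Lemma~\ref{lemma:cyclefix} singles out. Since $m$ is even we have $2\mid m$, so $\F_{q^2}$ is a subfield of $\Fqm$; under the fixed $\Fq$-isomorphism $\Fqm\cong\Fq^m$ it is a two-dimensional $\Fq$-subspace, hence a point $P_0\in\Gm$. First I would compute the stabilizer of an arbitrary orbit representative $P=\langle 1,\delta\rangle_{\Fq}$ with $\delta\in\Fqms\setminus\Fq$ (recall every orbit has such a representative). If $\delta\notin\F_{q^2}$, then by Lemma~\ref{lemma:cyclefix} no element of $\Fqms\setminus\Fq$ fixes $P$, so the stabilizer of $P$ is exactly $\Fq^*$, of order $q-1$. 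If $\delta\in\F_{q^2}$, then $P=\F_{q^2}=P_0$, and since any $\beta$ with $\beta P_0=P_0$ satisfies $\beta=\beta\cdot 1\in P_0=\F_{q^2}$, while conversely $\F_{q^2}^*$ preserves $\F_{q^2}$ under multiplication, the stabilizer of $P_0$ is $\F_{q^2}^*$, of order $q^2-1$. Because $\Fqms$ is abelian, all points in a given orbit have the same stabilizer, so the stabilizer order is an orbit invariant.

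By the orbit--stabilizer theorem, the orbit of $P_0$ has size $\tfrac{q^m-1}{q^2-1}$, and every orbit not containing $P_0$ consists of points with stabilizer $\Fq^*$ and hence has size $\tfrac{q^m-1}{q-1}$. This also yields the uniqueness claim: if some orbit had size $\tfrac{q^m-1}{q^2-1}$, its $\langle 1,\delta\rangle$-representative would have a stabilizer of order $q^2-1$, forcing $\langle 1,\delta\rangle=\F_{q^2}$ by Lemma~\ref{lemma:cyclefix}, so that orbit is the orbit of $P_0$. Thus there is exactly one orbit of size $\tfrac{q^m-1}{q^2-1}$; write $N$ for the number of the remaining (\emph{large}) orbits.

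Finally I would count points. Since $\Gm$ is the disjoint union of its orbits and $|\Gm|={m\brack 2}_q=\tfrac{(q^m-1)(q^{m-1}-1)}{(q^2-1)(q-1)}$ by \eqref{eq:parameters},
$$
N\cdot\frac{q^m-1}{q-1}+\frac{q^m-1}{q^2-1}=\frac{(q^m-1)(q^{m-1}-1)}{(q^2-1)(q-1)}.
$$
Dividing through by $\tfrac{q^m-1}{q-1}$ and simplifying gives $N=\tfrac{q^{m-1}-1}{q^2-1}-\tfrac{q-1}{q^2-1}=\tfrac{q^{m-1}-q}{q^2-1}=q\,\tfrac{q^{m-2}-1}{q^2-1}$, which is the claimed count; note that $q^2-1\mid q^{m-2}-1$ since $m-2$ is even, so $N$ is a positive integer. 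The only genuinely new point compared with Lemma~\ref{lemma:oddcycle} is identifying the stabilizer of $P_0$ precisely as $\F_{q^2}^*$ (not merely as ``something larger than $\Fq^*$''), since it is the exact order $q^2-1$ that pins down both the size of the exceptional orbit and, through the count above, the number of large orbits; beyond that the argument is routine bookkeeping, so I anticipate no real obstacle.
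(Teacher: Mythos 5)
Your proposal is correct and follows essentially the same route as the paper: identify $\F_{q^2}$ as the unique point with stabilizer $\F_{q^2}^*$ via Lemma~\ref{lemma:cyclefix}, apply the orbit--stabilizer theorem to get the two possible orbit sizes, and solve the counting equation $N\cdot\frac{q^m-1}{q-1}+\frac{q^m-1}{q^2-1}=|\Gm|$ for the number of large orbits. Your added remarks on the exactness of the stabilizer of $P_0$ and the uniqueness of the small orbit are welcome extra care but do not change the argument.
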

	
	\begin{proof}
		Let $P\in\Gm$ be an arbitrary element of the Grassmannian. If $P=\mathbb{F}_{q^2}$ then we have $\beta P=P$ if and only if $\beta\in\mathbb{F}_{q^2}^*$. In other words, the stabilizer of $P$ in this case is of size $q^2-1$ and hence from orbit-stabilizer theorem we get that the orbit of $P$ in this case is of size $\frac{q^m-1}{q^2-1}$. On the other hand, if $P\neq \mathbb{F}_{q^2}$, then from lemma \ref{lemma:cyclefix} we know that in this case we will have $\beta P=P$ if and only if $\beta\in\mathbb{F}_{q}^*$. In other words, in this case the stabilizer of $P$ is of size $q-1$ and hence the orbits of $P$ is of size $\frac{q^m-1}{q-1}$. Now as the cardinality of $\Gm$ is $\frac{(q^m-1)(q^{m-1}-1)}{(q^2-1)(q-1)}$ and if there are $r$ orbits of size $\frac{q^m-1}{q-1}$, then we have 
		\[
		\frac{(q^m-1)(q^{m-1}-1)}{(q^2-1)(q-1)}= r\frac{q^m-1}{q-1} \; +\; \frac{q^m-1}{q^2-1}.
		\]
		Solving for $r$ gives there are $q\frac{q^{m-2}-1}{q^2-1}$ orbits of size $\frac{q^m-1}{q-1}$.
	\end{proof}	
	\section{Evaluation of the determinant functions on Orbits of $\Gm$}
	
	Our bound and decoding algorithm of Grassmann code $C(2, m)$ hinges on the fact that the Grassmann code $C(2, m)$ is a subcode of quadratic forms in $2m$ variables, namely one  variable for each entry of the $2 \times m$ generic matrix. But in this section we will think this code in a slightly different way. We know that reordering the points of Grassmannian only gives an equivalent code. Therefore, we first fix an order on orbits and then order points in each orbit. We may think $C(2, m)$ as evaluation of determinantal functions on representatives of points in each orbit in these  fixed orders.  Recall that the determinantal functions can be written as a $\Fq$-linear combination of functions $f_{\a,\beta}(X, Y)$ where $\a,\;\beta\in\Fqm$. Also that, the orbit of $\langle 1, \delta\rangle_{\Fq}$ is the set $\Od = \{ \langle\gamma^i, \gamma^i\delta\rangle_{\Fq}\ | \ 0\le i\le q^m-1\}$ where $\gamma\in\Fqms$ is a generator of the multiplicative cyclic group $\Fqms$. Note that, when determinantal functions $f_{\a,\beta}(X, Y)$ are evaluated on an arbitrary point $\langle\gamma^i, \gamma^i\delta\rangle_{\Fq}$  of the orbit $\Od$, it gives
	$$
	f_{\a,\beta}(\gamma^i, \gamma^i\delta)= \Trqm(\alpha \gamma^i ) \Trqm(\beta\gamma^i\delta) - \Trqm(\alpha \gamma^i\delta) \Trqm(\beta \gamma^i).
	$$
	Since $\a, \beta$ and $\delta$ are fixed, we may think $f_{\a,\beta}(X,Y)$ as polynomial in one variable when it is evaluated on orbit $\Od$ in some fixed order. Hence, we consider polynomials 
	$$
	f_{\alpha, \beta}(T, \delta T) =   \Trqm(\alpha T) \Trqm(\beta \delta T) - \Trqm(\alpha \delta T) \Trqm(\beta T)
	$$
	where $\a,\;\beta\in\Fqm$. It implies that the evaluation of determinantal functions $f_{\a,\beta}(X, Y)$ on orbits $\Od$ is also given by the evaluation of polynomials $f_{\alpha, \beta}(T, \delta T)$ on  ``certain elements" of $\Fqms$. This evaluation also gives a linear code which we denote by $C^\delta(2,m)$. 
	%We now try to understand the nature of code $\Cd$ for the orbit $\Od$.
	This motivates the following definition.

	 \begin{definition}
	 	\label{Def: orbitcode}
	 	Let $\Od=\{P_1,\dots, P_N\}$ be an orbit in $\Gm$ with an orbit representative $\langle 1, \delta\rangle$ and let $P_i=\langle \gamma^{j_i},\gamma^{j_i}\delta\rangle_{\Fq}$.  Let $L_\delta(\Fq)$ be the $\Fq$- space spanned by the set $\{ f_{\alpha, \beta}(T, \delta T): \a,\;\beta\in\Fqm\}$, i.e., 
	 \begin{equation}
	 \label{def: ldfq}
	 L_\delta(\Fq)=\langle\{ f_{\alpha, \beta}(T, \delta T): \a,\;\beta\in\Fqm\}\rangle_{\Fq}.
	 \end{equation}
	 	Consider the evaluation map
	 \begin{eqnarray}
	 \label{eq: CodeCd}
	 	\Ev: 	L_\delta(\Fq)&\to& \Fq^N  \\
	 F(T)=\sum_{a, \a,\beta} af_{\alpha, \beta}(T, \delta T)&\mapsto& (F(\gamma^{j_1}),\ldots, F(\gamma^{j_N})). \nonumber
	 \end{eqnarray}
 The image of the evaluation map $\Ev$ is a code and we denote this code by $C^\delta(2,m)$.
	 \end{definition}
	
	We know that, all orbits (except possibly one)  in $\Gm$ are of size $\frac{q^m-1}{q-1}$. Therefore, almost all codes $C^\delta(2,m)$ are of length $N= \frac{q^m-1}{q-1}$.

%	
%	Let $N=\frac{q^m-1}{q-1}$. We have seen that all orbits (except possibly one) in $\Gm$ are of size $N$. Let $\Od$ be such an orbit and let $P_1,\dots, P_N$ are representative points of $\Od$. As we discussed, each of these $P_i$ is of the form $\langle \gamma^{j_i},\gamma^{j_i}\delta\rangle_{\Fq}$ for some $0\le j_i\le q^m-1$. Once $\delta$ is fixed, we may think the point $P_i$ as $\gamma^{j_i}\in\Fqms$. Therefore, for the rest of the draft an orbit $\Od$ of size $N$ and representative points $P_1\dots,P_N$ as above is fixed. Let $L_\delta(\Fq):=\Fq-\text{ linear span of the set }\{ f_{\alpha, \beta}(T, \delta T): \a,\;\beta\in\Fqm\}$ 
%	%and $L_\delta(\Fqm):=\Fqm-\text{ linear span of the set }\{ f_{\alpha, \beta}(T, \delta T): \a,\;\beta\in\Fqm\}$.
%	and let $\Ev$ be the evaluation map obtained by evaluating functions of $L_\delta(\Fq)$  on points $P_1,\ldots, P_N$.  The image of $\Ev$ is an $\Fq$-linear code. Further, it is not hard to see that 
%	\[
%	\mathrm{Im}(\Ev)= C^\delta(2,m).
%	\]
	%We denote the image of the evaluation map $\Ev_2$ by $C^\delta(2,m)_{\Fqm}$. Then from Lemma 1 of \cite{Sti}, we know that 
	%\begin{equation}
	%\label{eq:dim}
	%\dim_{\Fq} C^\delta(2,m)= \dim_{\Fqm} C^\delta(2,m)_{\Fqm}.
	%\end{equation}
	\begin{remark}
		\label{rmk:coeeficients}
		The determinatal function $ f_{\alpha, \beta}(T, \delta T)$ depends on $\alpha, \beta\in\Fqm$, therefore  when we think a function in $L_\delta(\Fq)$ as a polynomial in $T$, the coefficients are in the field $\Fqm$. But since determinantal functions $ f_{\alpha, \beta}(T, \delta T)$ are defined by Trace function, the evaluation of $ f_{\alpha, \beta}(T, \delta T)$ on points $P_i$ are in the field $\Fq$ and hence the code $\Cd$ is a code over the field $\Fq$.
	\end{remark} 

	\begin{lemma}
		The code $C^\delta(2,m)$ is a projection of the code $C(2,m)$ onto the coordinates in the orbit $\Od$.
	\end{lemma}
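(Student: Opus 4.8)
The plan is to unwind both constructions and verify that projecting $C(2,m)$ onto the coordinates indexed by $\Od$ reproduces verbatim the evaluation map of Definition~\ref{Def: orbitcode}. First I would pin down the Pl\"ucker representatives used to build $C(2,m)$: reordering the points of $\Gm$ or rechoosing the matrices $M_{P_i}$ only replaces $C(2,m)$ by an equivalent code, so I may assume that for each point $P_i=\langle\gamma^{j_i},\gamma^{j_i}\delta\rangle_{\Fq}$ of the orbit $\Od=\{P_1,\dots,P_N\}$ the matrix $M_{P_i}$ has as its two rows the $\Fq$-coordinate vectors of $\gamma^{j_i}$ and $\gamma^{j_i}\delta$. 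With this choice fixed, the projection $\pi_{\Od}$ is literally the map sending a codeword of $C(2,m)$ to the subtuple of its coordinates indexed by the points of $\Od$. I would also recall, from the discussion following Lemma~\ref{lemma:determinant}, that $C(2,m)$ is the image under $\Ev$ of the $\Fq$-space $\langle\{f_{\alpha,\beta}(X,Y):\alpha,\beta\in\Fqm\}\rangle_{\Fq}$: every $2\times2$ minor $X_iY_j-X_jY_i$ is an $\Fq$-linear combination of the $f_{\alpha,\beta}$'s, and conversely each $f_{\alpha,\beta}$, being alternating $\Fq$-bilinear on $\Fqm\times\Fqm$, is an $\Fq$-linear combination of such minors (the indices with $\beta\in\Fq\alpha$ only contribute the zero function, so their inclusion is immaterial).

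The substitution identity is the heart of the argument: for every $P_i\in\Od$ and all $\alpha,\beta\in\Fqm$ one has
\[
f_{\alpha,\beta}(M_{P_i})=f_{\alpha,\beta}(\gamma^{j_i},\gamma^{j_i}\delta)=f_{\alpha,\beta}(T,\delta T)\big|_{T=\gamma^{j_i}},
\]
which is immediate from the definitions of $f_{\alpha,\beta}(X,Y)$ and of $f_{\alpha,\beta}(T,\delta T)$. Extending this by $\Fq$-linearity, if $\bar f=\sum a_{\alpha,\beta}f_{\alpha,\beta}(X,Y)$ with $a_{\alpha,\beta}\in\Fq$ and we set $F(T):=\sum a_{\alpha,\beta}f_{\alpha,\beta}(T,\delta T)$, then $F\in L_\delta(\Fq)$ and $\bar f(M_{P_i})=F(\gamma^{j_i})$ for every $P_i\in\Od$. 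Hence applying $\pi_{\Od}$ to the codeword $(\bar f(M_{P_1}),\dots,\bar f(M_{P_n}))\in C(2,m)$ produces exactly $(F(\gamma^{j_1}),\dots,F(\gamma^{j_N}))$, which is $\Ev(F)$ in the sense of \eqref{eq: CodeCd}. The assignment $\bar f\mapsto F$ is $\Fq$-linear and surjective onto $L_\delta(\Fq)$ (surjectivity being exactly the statement that $L_\delta(\Fq)$ is the $\Fq$-span of the polynomials $f_{\alpha,\beta}(T,\delta T)$), so as $\bar f$ runs over the defining space of $C(2,m)$ the polynomial $F$ runs over all of $L_\delta(\Fq)$; this yields both inclusions $\pi_{\Od}(C(2,m))\subseteq C^\delta(2,m)$ and $\pi_{\Od}(C(2,m))\supseteq C^\delta(2,m)$, hence equality.

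The one point needing care — which I would single out as the main obstacle, even though it is ultimately a bookkeeping matter — is the identification of coordinates: one must be sure the orbit $\Od$ is enumerated with the same point representatives in both constructions, so that ``projection onto the orbit'' is a genuine restriction of coordinates rather than an equality valid only up to monomial (coordinate) equivalence of codes. Once the representatives are fixed as above, the remainder is just the substitution identity together with $\Fq$-linearity.
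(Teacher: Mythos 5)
Your proof is correct and follows essentially the same route as the paper: the paper's own proof is a one-line appeal to exactly the two facts you elaborate, namely that the $f_{\alpha,\beta}(T,\delta T)$ span the same space of determinantal/evaluation functions as the minors and that $\Gm$ is the disjoint union of the orbits, so projecting onto $\Od$ amounts to the substitution $X=\gamma^{j_i}$, $Y=\gamma^{j_i}\delta$. Your write-up simply makes explicit the choice of representatives and the two inclusions that the paper leaves implicit.
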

	\begin{proof}
		This follows from the fact that  polynomials in  $L_\delta(\Fq)$ give all determinantal functions and hence evaluation functions for Grassmann code, and that the Grassmannian $\Gm$ is the disjoint union of the orbits $\Od$.
%		$C(2, m)$ is the evaluation of the space $L_\delta(\Fq)$ on points of $\Gm$.	
	\end{proof}	
	%The next lemma is also a trivial one but it is of great importance to us. This lemma is the foundation for our decoding algorithm. Let $\delta\in \Fqms$ and $\Od$ be fixed as earlier.
	%\begin{lemma}
	%\label{lemma:RSsubcode}
	%	The code $C^\delta(2,m)$ is subcode of a $[N, q^{m-1}+q^{m-3}-q-1]$-Reed-Solomon code.
	%\end{lemma}
	%\begin{proof}
	%	This follows simply from the construction of $\Cd$ as $f_{\alpha, \beta}(T, \delta T) $ is a polynomial of degree atmost $q^{m-1}+q^{m-3}$
	%\end{proof}	
	\begin{lemma}
		\label{lemma:deg}
		If $\a, \beta\in\Fqm $ is such that $f_{\alpha, \beta}(T, \delta T)$ is a non-zero polynomial. Then  $q+1\le \deg(f_{\alpha, \beta}(T, \delta T))\leq q^{m-1}+q^{m-2}$. 
	\end{lemma}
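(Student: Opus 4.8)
The plan is to write $f_{\alpha,\beta}(T,\delta T)$ out explicitly as a $q$-linearized-type polynomial and simply read off which exponents of $T$ can occur. Starting from $\Trqm(\gamma T)=\sum_{k=0}^{m-1}\gamma^{q^k}T^{q^k}$ for $\gamma\in\Fqm$, I would substitute $\gamma=\alpha,\ \beta\delta,\ \alpha\delta,\ \beta$ into the definition $f_{\alpha,\beta}(T,\delta T)=\Trqm(\alpha T)\Trqm(\beta\delta T)-\Trqm(\alpha\delta T)\Trqm(\beta T)$ and multiply out the two products, obtaining
\[
f_{\alpha,\beta}(T,\delta T)=\sum_{i=0}^{m-1}\sum_{j=0}^{m-1}\alpha^{q^i}\beta^{q^j}\bigl(\delta^{q^j}-\delta^{q^i}\bigr)\,T^{q^i+q^j}.
\]

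The first key observation is that the diagonal terms $i=j$ contribute $\alpha^{q^i}\beta^{q^i}\bigl(\delta^{q^i}-\delta^{q^i}\bigr)=0$, so only monomials $T^{q^i+q^j}$ with $i\ne j$ can occur. Next I would collect, for each pair $0\le i<j\le m-1$, the contributions coming from the index pairs $(i,j)$ and $(j,i)$; a one-line computation gives that the coefficient of $T^{q^i+q^j}$ is
\[
c_{ij}:=\bigl(\delta^{q^j}-\delta^{q^i}\bigr)\bigl(\alpha^{q^i}\beta^{q^j}-\alpha^{q^j}\beta^{q^i}\bigr),
\]
so that $f_{\alpha,\beta}(T,\delta T)=\sum_{0\le i<j\le m-1}c_{ij}\,T^{q^i+q^j}$.

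The third step is to observe that the exponents $q^i+q^j$ for $0\le i<j\le m-1$ are pairwise distinct: in base $q$, $q^i+q^j$ is the integer whose only nonzero digits are a $1$ in positions $i$ and $j$, so distinct pairs $\{i,j\}$ give distinct exponents. Hence the displayed sum is already the reduced form of the polynomial, and $\deg f_{\alpha,\beta}(T,\delta T)=\max\{\,q^i+q^j : c_{ij}\ne 0\,\}$. Since $f_{\alpha,\beta}(T,\delta T)$ is assumed to be nonzero, at least one $c_{ij}$ is nonzero, and for any such pair one has $q+1=q^0+q^1\le q^i+q^j\le q^{m-2}+q^{m-1}$, the two extremes being realized at $(i,j)=(0,1)$ and $(i,j)=(m-2,m-1)$ respectively. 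Combining, $q+1\le\deg f_{\alpha,\beta}(T,\delta T)\le q^{m-1}+q^{m-2}$, which is exactly the assertion.

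There is no serious obstacle here; the one point that deserves care is the distinctness of the exponents $q^i+q^j$, since that is precisely what guarantees that no cancellation among the surviving monomials can push the degree below the largest exponent with a nonzero coefficient. The remaining work is the elementary bookkeeping that collapses the double sum over $(i,j)$ into the single sum over pairs $i<j$ and verifies the formula for $c_{ij}$.
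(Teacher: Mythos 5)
Your proposal is correct and follows essentially the same route as the paper: expand the two trace products into the double sum $\sum_{i,j}\alpha^{q^i}\beta^{q^j}(\delta^{q^j}-\delta^{q^i})T^{q^i+q^j}$, note the diagonal vanishes, pair $(i,j)$ with $(j,i)$ to get the coefficients $(\delta^{q^j}-\delta^{q^i})(\alpha^{q^i}\beta^{q^j}-\alpha^{q^j}\beta^{q^i})$ for $i<j$, and read off that every surviving exponent $q^i+q^j$ lies between $q+1$ and $q^{m-2}+q^{m-1}$. Your explicit remark that the exponents $q^i+q^j$ for distinct pairs $\{i,j\}$ are pairwise distinct (so no cancellation can lower the degree) is a detail the paper passes over with ``clearly,'' but it is the same argument.
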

	\begin{proof}
		We simply expand the determinantal function $f_{\alpha, \beta}(T, \delta T)$ using the trace function  $\Trqm(T) = \sum\limits_{i=0}^{m-1} T^{q^i}$. Note that, 
		
		\begin{eqnarray*}
			f_{\alpha, \beta}(T, \delta T)&=&\Trqm(\alpha T) \Trqm(\beta \delta T) - \Trqm(\alpha \delta T) \Trqm(\beta T)\\
			&=&  \left(\sum\limits_{i=0}^{m-1} (\alpha T)^{q^i}\right) \left(\sum\limits_{j=0}^{m-1} (\beta \delta T)^{q^j}\right) - \left(\sum\limits_{i=0}^{m-1} (\beta T)^{q^i}\right) \left(\sum\limits_{j=0}^{m-1}(\alpha \delta T)^{q^j}\right) \\
			&=& \left(\sum\limits_{i=0}^{m-1}\sum\limits_{j=0}^{m-1} (\alpha T)^{q^i}(\beta \delta T)^{q^j}\right) - \left(\sum\limits_{i=0}^{m-1}\sum\limits_{j=0}^{m-1} (\beta T)^{q^i} (\alpha \delta T)^{q^j}\right) \\
			&=&  \left(\sum\limits_{i=0}^{m-1}\sum\limits_{j=0}^{m-1} (\alpha T)^{q^i}(\beta \delta T)^{q^j}  - (\beta T)^{q^i} (\alpha \delta T)^{q^j}\right)\\
			&=& \left(\sum\limits_{i=0}^{m-1}\sum\limits_{j=0}^{m-1} (\alpha^{q^i} \beta^{q^j} - \beta^{q^i}\alpha^{q^j}) \delta^{q^j} T^{q^i+q^j}\right)\\
			&=&\left(\sum\limits_{i=0}^{m-1}\sum\limits_{j=0, j \neq i}^{m-1} (\alpha^{q^i} \beta^{q^j} - \beta^{q^i}\alpha^{q^j}) \delta^{q^j} T^{q^i+q^j}\right)\\
			& =& \left(\sum\limits_{i=0}^{m-2}\sum\limits_{j > i}^{m-1} (\alpha^{q^i} \beta^{q^j} - \beta^{q^i}\alpha^{q^j})( \delta^{q^j}-\delta^{q^i}) T^{q^i+q^j}\right)
		\end{eqnarray*}
		where we used $\alpha^{q^i} \beta^{q^j} - \beta^{q^i}\alpha^{q^j}=0$ for $i=j$. Clearly, the degree of this polynomial is at least $q+1$ and at most $q^{m-1}+q^{m-2}$.
	\end{proof}
	The next corollary is an immediate consequence of Lemma \ref{lemma:deg}. 
	\begin{corollary}
		Suppose that the function $f_{\alpha, \beta}(T, \delta T)$ is not identically zero over $\Fqm^*$. Then $f_{\alpha, \beta}(T, \delta T)$ can have at most $q^{m-1}+q^{m-2}- q-1$ many zeros in $\Fqms$.
	\end{corollary}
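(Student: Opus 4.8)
The plan is to read off the conclusion directly from the explicit expansion of $f_{\alpha,\beta}(T,\delta T)$ already produced in the proof of Lemma~\ref{lemma:deg}, after separating the forced low-order factor $T^{q+1}$, and then invoke the elementary fact that a nonzero univariate polynomial over a field has at most $\deg$ roots.

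First I would recall from the last line of the computation in Lemma~\ref{lemma:deg} that
\[
f_{\alpha,\beta}(T,\delta T)=\sum_{0\le i<j\le m-1}\bigl(\alpha^{q^i}\beta^{q^j}-\beta^{q^i}\alpha^{q^j}\bigr)\bigl(\delta^{q^j}-\delta^{q^i}\bigr)\,T^{q^i+q^j},
\]
so every monomial that actually occurs has exponent $q^i+q^j\ge q^0+q^1=q+1$. Consequently $T^{q+1}$ divides $f_{\alpha,\beta}(T,\delta T)$ in $\Fqm[T]$, and I can write $f_{\alpha,\beta}(T,\delta T)=T^{q+1}h(T)$ for a unique $h(T)\in\Fqm[T]$. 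Since $\Fqm[T]$ is an integral domain, degrees add, so $\deg h=\deg f_{\alpha,\beta}(T,\delta T)-(q+1)$, and Lemma~\ref{lemma:deg} gives $\deg f_{\alpha,\beta}(T,\delta T)\le q^{m-1}+q^{m-2}$, hence $\deg h\le q^{m-1}+q^{m-2}-q-1$.

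Next I would transfer the hypothesis to $h$. The factor $T^{q+1}$ has no zero in $\Fqms$, so for $t\in\Fqms$ we have $f_{\alpha,\beta}(t,\delta t)=0$ if and only if $h(t)=0$; in particular $f_{\alpha,\beta}(T,\delta T)$ and $h$ vanish on exactly the same subset of $\Fqms$. Thus the assumption that $f_{\alpha,\beta}(T,\delta T)$ is not identically zero on $\Fqms$ forces $h$ to be a nonzero function on $\Fqms$, hence a nonzero polynomial. The standard root bound then says $h$ has at most $\deg h$ roots in $\Fqm$, so at most $q^{m-1}+q^{m-2}-q-1$ roots in $\Fqms$; and by the equivalence just noted this is also an upper bound for the number of zeros of $f_{\alpha,\beta}(T,\delta T)$ in $\Fqms$, which is exactly the claimed bound.

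There is essentially no obstacle here: the argument is bookkeeping on top of Lemma~\ref{lemma:deg}. The one place to be slightly careful is the passage from ``not identically zero as a function on $\Fqms$'' to ``nonzero polynomial'', which is why I first strip off $T^{q+1}$ and argue with $h$ rather than with $f_{\alpha,\beta}(T,\delta T)$ directly.
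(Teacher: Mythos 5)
Your argument is correct and is exactly the one the paper intends: the corollary is stated as an immediate consequence of Lemma~\ref{lemma:deg}, and the paper's remark immediately afterwards that $T^{q+1}$ divides $f_{\alpha,\beta}(T,\delta T)$ signals the same factorization $f_{\alpha,\beta}(T,\delta T)=T^{q+1}h(T)$ with $\deg h\le q^{m-1}+q^{m-2}-q-1$ that you use. Your added care about passing from ``nonzero function on $\Fqms$'' to ``nonzero polynomial'' is sound and only makes the bookkeeping more explicit than the paper's.
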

	Note that the polynomial $f_{\alpha, \beta}(T, \delta T)$ is divisible by $T^{q+1}$. Next, we determine the dimension of the code $\Cd$. To do so, we first determine a spanning set for the vector space $L_\delta(\Fq)$. We determine it in the next two lemmas.
	
	%Therefore, if $f_{\alpha, \beta}(T, \delta T)$  is non-zero polynomial then we can understand its zero in $\Fqm$ to find a bound on the number of zeros in $\Fqms$. Also, an estimate of zeros of  $f_{\alpha, \beta}(T, \delta T)$ in $\Fqms$ gives an estimate to the minimum  distance of the code $C^\delta(2,m)$. In the next lemma we find the dimension of the code $C^\delta(2,m)$.

	\begin{lemma}\label{lemma:monomialspan}
		Let $\delta$ be a fixed nonzero element of $\Fqm$. Let $d>1,\;d\mid m$ be the smallest positive integer such that $\delta$ is contained in the field $\F_{q^d}$ Then 
		\[
		L_\delta(\Fq)\subseteq\langle\{(\delta^{q^j}-\delta^{q^i}) T^{q^i+q^j}, 0 \leq i < j \leq m-1, d \nmid j-i  \}\rangle_{\Fqm}
		\]

		%for $\a, \beta\in\Fqm$ the function $f_{\alpha, \beta}(T, \delta T)$ lies in the space  $\langle \{ T^{q^i+q^j}, 0 \leq i < j \leq m-1, d \not |j-i  \}\rangle$.
	\end{lemma}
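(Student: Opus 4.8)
The plan is to read the claimed inclusion off the explicit expansion of $f_{\alpha,\beta}(T,\delta T)$ already produced in the proof of Lemma~\ref{lemma:deg}. From that computation one has, for all $\alpha,\beta\in\Fqm$,
\[
f_{\alpha, \beta}(T, \delta T)=\sum_{i=0}^{m-2}\sum_{j=i+1}^{m-1}\bigl(\alpha^{q^i}\beta^{q^j}-\beta^{q^i}\alpha^{q^j}\bigr)\bigl(\delta^{q^j}-\delta^{q^i}\bigr)\,T^{q^i+q^j},
\]
so each generator $f_{\alpha,\beta}(T,\delta T)$ of $L_\delta(\Fq)$ is, by inspection, an $\Fqm$-linear combination of the monomials $(\delta^{q^j}-\delta^{q^i})T^{q^i+q^j}$ with $0\le i<j\le m-1$. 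Since the $\Fq$-span of a set is contained in its $\Fqm$-span, it suffices to prune from this spanning set the monomials that are forced to vanish, namely those with $\delta^{q^j}-\delta^{q^i}=0$.

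The remaining step is purely about subfields. I would use that $\delta\in\F_{q^d}$ is equivalent to $\delta^{q^d}=\delta$, so that $\delta^{q^i}$ depends only on $i\bmod d$; hence $d\mid j-i$ implies $\delta^{q^i}=\delta^{q^j}$, and the corresponding monomial $(\delta^{q^j}-\delta^{q^i})T^{q^i+q^j}$ is identically zero. Discarding exactly these terms shows that every $f_{\alpha,\beta}(T,\delta T)$, and therefore every element of $L_\delta(\Fq)$, lies in $\langle\{(\delta^{q^j}-\delta^{q^i})T^{q^i+q^j}:0\le i<j\le m-1,\ d\nmid j-i\}\rangle_{\Fqm}$, which is the assertion.

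I do not anticipate a genuine obstacle: the only points needing care are quoting the expansion from Lemma~\ref{lemma:deg} correctly (in particular that the diagonal terms $i=j$ have already cancelled) and noting that enlarging the scalars from $\Fq$ to $\Fqm$ can only enlarge the span, so the inclusion is unaffected. It is worth recording, although it is not needed for this lemma, that the converse also holds: because $d$ is minimal with $d\mid m$ and $\delta\in\F_{q^d}$, one has $\delta^{q^i}=\delta^{q^j}$ only when $d\mid j-i$, so the monomials kept above are precisely those that can occur with a nonzero coefficient; this is the fact that will let one upgrade the present inclusion to an equality, and eventually compute $\dim\Cd$, in what follows.
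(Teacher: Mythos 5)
Your proof is correct and follows essentially the same route as the paper: both read the inclusion off the explicit expansion of $f_{\alpha,\beta}(T,\delta T)$ from Lemma~\ref{lemma:deg} and then use $\delta^{q^d}=\delta$ to kill the terms with $d\mid j-i$. The only (harmless) addition is your closing remark about the converse, which the paper defers to Lemma~\ref{lemma:dim}.
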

	\begin{proof}
		It is enough to prove that for each $\a,\beta\in\Fqm$, the determinantal function  $f_{\alpha, \beta}(T, \delta T)$ can be written as a $\Fqm$-linear combination of monomials in the set $\{(\delta^{q^j}-\delta^{q^i}) T^{q^i+q^j}, 0 \leq i < j \leq m-1, d \nmid j-i  \}$.
		Lemma \ref{lemma:deg} states that $$f_{\alpha, \beta}(T, \delta T)  = \left(\sum\limits_{i=0}^{m-2}\sum\limits_{j > i}^{m-1} (\alpha^{q^i} \beta^{q^j} - \beta^{q^i}\alpha^{q^j})( \delta^{q^j}-\delta^{q^i}) T^{q^i+q^j}\right) .$$ 
		
		The condition on $\delta$ implies that $\delta^{q^d}=\delta$. For any term of the form $X^{q^i+q^j}$ where $j = nd+i$, we obtain that the 
		$$
		\delta^{q^j}-\delta^{q^i} = \delta^{q^{nd+i}}-\delta^{q^i} = (\delta^{q^{nd}})^{q^i}  - \delta^{q^i} = \delta^{q^i}  - \delta^{q^i} =0.
		$$
		Therefore, the expansion of $f_{\alpha, \beta}(T, \delta T)$ has no terms of the form $T^{q^i+q^j}$ where $d \mid j-i$.

	\end{proof}
	In the next lemma we prove that both the spaces discussed in the last lemma are the same.
	
	\begin{lemma}
		\label{lemma:dim}
		Let $\delta$ be a fixed nonzero element of $\Fqm$ and let $d>1,\;d\mid m$ be the smallest positive integer such that $\delta$ is contained in the field $\F_{q^d}$. Then 
		\[
		L_\delta(\Fq)=\langle\{(\delta^{q^j}-\delta^{q^i}) T^{q^i+q^j}, 0 \leq i < j \leq m-1, d \nmid j-i  \}\rangle_{\Fqm}
		\]
		
	\end{lemma}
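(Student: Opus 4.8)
The plan is to prove only the inclusion $\supseteq$, since $\subseteq$ is exactly Lemma~\ref{lemma:monomialspan}. First I would simplify the target. By the argument in the proof of Lemma~\ref{lemma:monomialspan} (using the minimality of $d$), $\delta^{q^j}-\delta^{q^i}$ is a \emph{nonzero} element of $\Fqm$ precisely when $d\nmid j-i$, so the right-hand side equals $\langle\{T^{q^i+q^j}:0\le i<j\le m-1,\ d\nmid j-i\}\rangle_{\Fqm}$. Distinct pairs $\{i,j\}$ give distinct exponents $q^i+q^j$ (the base-$q$ digits of $q^i+q^j$ recover $\{i,j\}$), so these monomials are $\Fqm$-linearly independent in $\Fqm[T]$, and the target has $\Fqm$-dimension $N:=\#\{(i,j):0\le i<j\le m-1,\ d\nmid j-i\}$. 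In view of $\subseteq$, it therefore suffices to exhibit $N$ polynomials in $L_\delta(\Fq)$ that are $\Fqm$-linearly independent.

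To produce them I would fix an $\Fq$-basis $\omega_1,\dots,\omega_m$ of $\Fqm$ and form the Moore matrix $W=(\omega_s^{q^{j}})_{1\le s\le m,\ 0\le j\le m-1}$, which is invertible precisely because $\omega_1,\dots,\omega_m$ is a basis. Specializing the expansion from the proof of Lemma~\ref{lemma:deg} to $\alpha=\omega_s$, $\beta=\omega_t$ with $s<t$ gives
\[
f_{\omega_s,\omega_t}(T,\delta T)=\sum_{0\le i<j\le m-1}\bigl(\omega_s^{q^i}\omega_t^{q^j}-\omega_s^{q^j}\omega_t^{q^i}\bigr)\bigl(\delta^{q^j}-\delta^{q^i}\bigr)\,T^{q^i+q^j},
\]
and the factor $\omega_s^{q^i}\omega_t^{q^j}-\omega_s^{q^j}\omega_t^{q^i}$ is exactly the $2\times2$ minor of $W$ on rows $s,t$ and columns $i,j$, i.e. the $(\{s,t\},\{i,j\})$-entry of the second compound matrix $\wedge^2 W$. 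Identifying a polynomial $\sum_{i<j}c_{ij}T^{q^i+q^j}$ with its coefficient vector $(c_{ij})_{i<j}$, the coefficient vectors of the $\binom{m}{2}$ polynomials $f_{\omega_s,\omega_t}$ are precisely the rows of the $\binom{m}{2}\times\binom{m}{2}$ matrix $(\wedge^2 W)\,\Delta$, where $\Delta$ is the diagonal matrix whose $(\{i,j\},\{i,j\})$-entry is $\delta^{q^j}-\delta^{q^i}$.

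Now I would finish by linear algebra. The compound of a product is the product of the compounds, so $\wedge^2 W$ is invertible with inverse $\wedge^2(W^{-1})$ (equivalently $\det(\wedge^2 W)=(\det W)^{m-1}\neq0$). Hence the row space of $(\wedge^2 W)\,\Delta$ equals the row space of $\Delta$, which, because the diagonal entries of $\Delta$ vanish exactly on the pairs with $d\mid j-i$, is the coordinate subspace spanned by $\{e_{\{i,j\}}:d\nmid j-i\}$, of dimension $N$. Translating back through the identification, the $\Fqm$-span of $\{f_{\omega_s,\omega_t}(T,\delta T):1\le s<t\le m\}$ is exactly $\langle\{T^{q^i+q^j}:d\nmid j-i\}\rangle_{\Fqm}$, i.e. the right-hand side; since each $f_{\omega_s,\omega_t}$ lies in $L_\delta(\Fq)$, this yields $\supseteq$, and combined with Lemma~\ref{lemma:monomialspan} the two spaces coincide.

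The only step that takes any thought is the identification in the middle paragraph: recognizing the coefficients $\omega_s^{q^i}\omega_t^{q^j}-\omega_s^{q^j}\omega_t^{q^i}$ as $2\times2$ minors of a Moore matrix, so that letting the basis pair $(\omega_s,\omega_t)$ run over the $\binom{m}{2}$ choices turns $\{f_{\omega_s,\omega_t}\}$ into (a nonzero column rescaling of) the rows of the invertible compound matrix $\wedge^2 W$. Everything after that — invertibility of the Moore matrix, invertibility of its compound, and the effect on a row space of scaling columns by nonzero versus zero scalars — is routine.
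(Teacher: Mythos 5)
Your proof is correct in substance and reaches the conclusion by a genuinely different route. The paper also disposes of ``$\subseteq$'' by citing Lemma~\ref{lemma:monomialspan}, but proves ``$\supseteq$'' one monomial at a time: it fixes a normal basis $\gamma,\gamma^q,\ldots,\gamma^{q^{m-1}}$, uses the nonsingularity of the matrix $(\gamma^{q^{i+j}})$ to find $b_{s,i}\in\Fq$ with $\sum_i b_{s,i}\Trqm(\gamma^{q^i}T)=T^{q^s}$, and applies this trace interpolation twice (first in the $\alpha$ slot, then in the $\beta$ slot) to exhibit each generator $(\delta^{q^r}-\delta^{q^s})T^{q^s+q^r}$ with $d\nmid r-s$ as an \emph{explicit $\Fq$-linear combination} of functions $f_{\alpha,\beta}(T,\delta T)$. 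Your Moore-matrix/compound-matrix argument is essentially the same computation repackaged as linear algebra over $\Fqm$ (the normal-basis matrix is itself a Moore matrix, and the $b_{s,i}$ are the rows of its inverse); it works for an arbitrary basis and has the bonus of delivering the dimension count $N=\binom{m}{2}-\binom{m/d}{2}d$ used in Corollary~\ref{cor:projdim} in the same stroke.

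One step to tighten: at the end you infer $L_\delta(\Fq)\supseteq\langle T^{q^i+q^j}: d\nmid j-i\rangle_{\Fqm}$ from the fact that the $f_{\omega_s,\omega_t}$, which lie in $L_\delta(\Fq)$, span the right-hand side \emph{over $\Fqm$}. Since $L_\delta(\Fq)$ is only an $\Fq$-space, what you have actually shown is that the $\Fqm$-span of $L_\delta(\Fq)$ contains the right-hand side, not that $L_\delta(\Fq)$ itself does; the paper's normal-basis computation avoids this by landing each generator directly in the $\Fq$-span. In fairness, the lemma as stated equates an $\Fq$-span with an $\Fqm$-span, which cannot hold verbatim (compare cardinalities), so the imprecision originates in the statement. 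Under the reading that the subsequent corollary actually uses --- that $\dim_{\Fq}L_\delta(\Fq)$ equals the $\Fqm$-dimension $N$ of the right-hand side --- your argument supplies the lower bound $\dim_{\Fq}L_\delta(\Fq)\ge N$ just as the paper's does, since any $N$ rows of $(\wedge^2 W)\Delta$ that are $\Fqm$-independent are a fortiori $\Fq$-independent.
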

	
	\begin{proof}

		%From Lemma \ref{lemma:monomialspan}, we know that for every $\a,\;\beta\in\Fqm$, functions of the form $f_{\alpha, \beta}(T, \delta T)$ can be written as a linear combination of polynomials from the set $\{(\delta^{q^i}-\delta^{q^j})T^{q^i+ q^j}: 0\le i<j\le m-1\}$. Therefore, to complete the proof of the lemma, we only need to show that for every 
		
		In the view of Lemma \ref{lemma:monomialspan}, we only have to show that for every 
		$0\le s< r\le m-1$, satisfying $d \nmid r-s$, there exist $\a,\;\beta\in\Fqm $ such that $(\delta^{q^r}-\delta^{q^s})T^{q^s+q^r}$ can be written as an $\Fq$-linear combination of some $f_{\alpha, \beta}(T, \delta T)$ for some $\a, \beta\in \Fqm$. Let $\gamma, \gamma^q, \ldots, \gamma^{q^{m-1}}$ be a normal basis for $\Fqm$ over $\Fq$. This implies that the matrix given by $\left( \gamma^{q^{i+j}} \right)_{0 \leq i,j \leq m-1}$ is nonsingular. Thus for any $0 \leq s \leq m-1$ there exists $b_{s,1}, b_{s,2}, \ldots, b_{s,m-1} \in \Fq$ such that
		$$ \sum\limits_{i = 0}^{m-1} b_{s,i} (\gamma^{q^{i}}, \gamma^{q^{i+1}}, \ldots, \gamma^{q^{i+m-1}} ) = e_{s}  $$ 
		where $e_s$ is the standard basis vector with a $1$ in $s^{\it th}$ position and zeroes everywhere else. As the vector $(\gamma^{q^{i}}, \gamma^{q^{i+1}}, \ldots, \gamma^{q^{i-1}} )$ is the coefficient vector for  $\Trqm(\gamma^{q^i} T)$, (omitting monomials not of the form $T^{q^j}$), taking the dot product of both sides with vector $(T, T^{q},\ldots, T^{q^{m-1}})$, we get 
		\begin{equation}
		\label{eq:normalbasis}
		\sum\limits_{i = 0}^{m-1} b_{s,i} \Trqm(\gamma^{q^i} T) = T^{q^s}.
		\end{equation}
		Thus, for each fixed $\beta\in\Fqm$, we have
		
		\begin{align*}
		&\sum\limits_{i = 0}^{m-1} b_{s,i} f_{\gamma^{q^i} ,\beta}(T, \delta \beta T) \\
		&=\sum\limits_{i = 0}^{m-1} b_{s,i} ( \Trqm(\gamma^{q^i}T)\Trqm(\delta \beta T) - \Trqm(\gamma^{q^i}\delta T)\Trqm(\beta T)) \\
		&= \sum\limits_{i = 0}^{m-1} b_{s,i} \Trqm(\gamma^{q^i}T)\Trqm(\delta \beta T) - \sum\limits_{i = 0}^{m-1} b_{s,i}\Trqm(\gamma^{q^i}\delta T)\Trqm( \beta T) \\
		&=  T^{q^s}\Trqm(\delta \beta T) - (\delta T)^{q^s} \Trqm( \beta T).
		\end{align*}
		
		\noindent	Now, taking $b_{r,1}, b_{r,2}, \ldots, b_{r,m-1}$ as in equation \eqref{eq:normalbasis} and consider the linear combination
		\begin{align*}
		&\sum\limits_{i=0}^{m-1}b_{r,i}(T^{q^s}\Trqm(\delta \gamma^{q^i} T) - (\delta T)^{q^s} \Trqm( \gamma^{q^i} T)) \\
		&=\sum\limits_{i=0}^{m-1}b_{r,i}T^{q^s}\Trqm(\delta \gamma^{q^i} T) - \sum\limits_{i=0}^{m-1}b_{r,i}(\delta T)^{q^s} \Trqm( \gamma^{q^i} T) \\
		&= T^{q^s}\sum\limits_{i=0}^{m-1}b_{r,i}\Trqm(\delta \gamma^{q^i} T) - (\delta T)^{q^s}\sum\limits_{i=0}^{m-1}b_{r,i} \Trqm( \gamma^{q^i} T) \\
		&= T^{q^s}(\delta T)^{q^r} - (\delta T)^{q^s}T^{q^r}\\
		&= (\delta^{q^r}-\delta^{q^s})T^{q^s+q^r}.
		\end{align*}
		In other words, for any $0\le r<s\le m-1$ with $d \nmid r-s$, 	monomials $(\delta^{q^r}-\delta^{q^s})T^{q^s+q^r}$ can be written as a linear combination of $ f_{\alpha, \beta}(T, \delta T)$ for some $\a,\; \beta\in\Fqm$. This completes the proof of the Lemma.
		%	
		%	$$\sum\limits_{i=0}^{m-1}b_{r,i}(T^{q^s}\Trqm(\delta \gamma^{q^i} T) - (\delta T)^{q^s} \Trqm( \gamma^{q^i} T))$$
		%	$$=\sum\limits_{i=0}^{m-1}b_{r,i}T^{q^s}\Trqm(\delta \gamma^{q^i} T) - \sum\limits_{i=0}^{m-1}b_{r,i}(\delta T)^{q^s} \Trqm( \gamma^{q^i} T)$$
		%	$$=(T)^{q^s}\sum\limits_{i=0}^{m-1}b_{r,i}\Trqm(\delta \gamma^{q^i} T) - (\delta T)^{q^s}\sum\limits_{i=0}^{m-1}b_{r,i} \Trqm( \gamma^{q^i} T)$$
		%	$$=(T)^{q^s}(\delta T)^{q^r} - (\delta T)^{q^s}T^{q^r}$$
		%	$$=(\delta^{q^r}-\delta^{q^s})T^{q^s+q^r}.$$
		
	\end{proof}
	We have now found a simple basis for the space $L_\delta(\Fq)$. From this basis, it is clear that the dimension of the space $L_{\delta}(\Fq)$ is $\binom{m}{2}-\binom{\frac{m}{d}}{2}d$. Using this fact, we get the following corollary.

	\begin{corollary}
		\label{cor:projdim}
		Let $\delta\in\Fqm \setminus \Fq$ be a nonzero element such that the orbit $\Od$ is of size $N=\frac{q^m-1}{q-1}$. Let $1< d\le m$ be the smallest positive integer such that $d\mid m$ and $\delta\in\mathbb{F}_{q^d}$.  Then the code $\Cd$ has dimension $\binom{m}{2}-\binom{\frac{m}{d}}{2}d$. In particular, if $m$ is prime, then $\dim \Cd = \dim C(2, m).$
	\end{corollary}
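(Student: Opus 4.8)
The plan is to identify $\dim\Cd$ with $\dim L_\delta(\Fq)$---which has already been computed above---and then read off the prime case. Since by Definition~\ref{Def: orbitcode} the code $\Cd$ is the image of the evaluation map $\Ev\colon L_\delta(\Fq)\to\Fq^N$, it suffices to prove that $\Ev$ is injective; then $\dim\Cd=\dim L_\delta(\Fq)$, and the stated value follows from the dimension count recorded after Lemma~\ref{lemma:dim}, with the ``in particular'' clause immediate.

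To prove injectivity, let $F\in L_\delta(\Fq)$ be nonzero. As $F$ is an $\Fq$-linear combination of polynomials $f_{\alpha,\beta}(T,\delta T)$, Lemma~\ref{lemma:deg} gives $\deg F\le q^{m-1}+q^{m-2}$. Writing $\Od=\{P_1,\dots,P_N\}$ with $P_i=\langle\gamma^{j_i},\gamma^{j_i}\delta\rangle_{\Fq}$, distinctness of the $P_i$ forces the scalars $\gamma^{j_1},\dots,\gamma^{j_N}$ to be pairwise distinct elements of $\Fqms$, so $\Ev(F)=(F(\gamma^{j_1}),\dots,F(\gamma^{j_N}))$ lists the values of $F$ at $N$ distinct points. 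The hypothesis $|\Od|=\tfrac{q^m-1}{q-1}$ rules out $\delta\in\F_{q^2}\setminus\Fq$ (by Lemma~\ref{lemma:evencycle}, such $\delta$ give $\langle1,\delta\rangle=\F_{q^2}$, whose orbit has size $\tfrac{q^m-1}{q^2-1}$), so $d=\deg_{\Fq}(\delta)\ge 3$ and hence $m\ge 3$; therefore
\[
N=\frac{q^m-1}{q-1}=1+q+\cdots+q^{m-1}>q^{m-1}+q^{m-2}\ge\deg F .
\]
A nonzero polynomial of degree $<N$ has fewer than $N$ roots, so it cannot vanish at all $N$ evaluation points, giving $\Ev(F)\ne 0$. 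Thus $\ker\Ev=\{0\}$ and $\dim\Cd=\dim L_\delta(\Fq)$.

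It remains to count $\dim L_\delta(\Fq)$ and treat $m$ prime. By Lemma~\ref{lemma:dim} the displayed spanning set is a basis, indexed by the pairs $0\le i<j\le m-1$ with $d\nmid j-i$; partitioning $\{0,1,\dots,m-1\}$ into the $d$ residue classes modulo $d$, each of size $m/d$ since $d\mid m$, the pairs with $d\mid j-i$ are precisely those lying in a common class, $\binom{m/d}{2}d$ in all, so $\dim L_\delta(\Fq)=\binom m2-\binom{m/d}{2}d$. If $m$ is prime, the divisor $d>1$ of $m$ must equal $m$, whence $\binom{m/d}{2}=\binom12=0$ and $\dim\Cd=\binom m2=\dim C(2,m)$ by \eqref{eq:parameters}. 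The only delicate point is the strict inequality $N>q^{m-1}+q^{m-2}$, which degenerates to equality exactly when $m=2$; this is why one first uses the orbit-size hypothesis (through Lemma~\ref{lemma:evencycle}) to eliminate that case before comparing the degree of $F$ to the number of evaluation points.
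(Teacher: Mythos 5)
Your proof is correct and takes essentially the same route as the paper's: reduce to injectivity of the evaluation map by comparing the degree bound $q^{m-1}+q^{m-2}$ from Lemma~\ref{lemma:deg} with the number $N=\frac{q^m-1}{q-1}$ of distinct evaluation points, then read off $\dim L_\delta(\Fq)$ from Lemma~\ref{lemma:dim}. You are in fact slightly more careful than the paper, which silently assumes $N>q^{m-1}+q^{m-2}$; your observation that the orbit-size hypothesis excludes $\delta\in\F_{q^2}$ and hence the degenerate case $m=2$ closes that small gap.
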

	\begin{proof}
		
		As we have discussed, the dimension of the space $L_{\delta}(\Fq)= \binom{m}{2}-\binom{\frac{m}{d}}{2}d$. It is enough to show that the evaluation map defined in equation \eqref{eq: CodeCd} is an injective map. Note that the length $N$, of the code $\Cd$, is strictly bigger than $q^{m-1} + q^{ m-2}$ and a polynomial in $L_{\delta}(\Fq)$ can have degree at most  $q^{m-1} + q^{ m-2}$. In otherwords, no nozero polynomial from $L_{\delta}(\Fq)$ can map to zero, and hence the evaluation map defined in equation \eqref{eq: CodeCd} is injective.

%		The kernel of evaluation map $\Ev_L: \F_{q^m}[T] \rightarrow \F_{q^m}^N$ is an ideal of the form $\langle \prod\limits{\gamma \in L}(T-\gamma) \rangle.$ The polynomial $\prod\limits{\gamma \in L}(T-\gamma)$ has degree $N$. Therefore the evaluation map on $\Ev_L: \F_{q^m}[T] \rightarrow \F_{q^m}^N$. As the monomials in $L_\delta(\Fq)$ all have degrees strictly smaller than $N$, it follows that the evaluation map $\Ev: L_\delta(\Fq) \rightarrow \F_{q^m}^N$ is injective. Therefore $\dim_{\Fq}\Cd = \dim_{\Fq}(L_\delta(\Fq)) =  \dim_{\Fqm}(L_\delta(\Fq)) =\binom{m}{2}-\binom{\frac{m}{d}}{2}d$.
	\end{proof}

	\section{Decoding Grassmann code $C(2, m)$}

	In this section, we propose a decoding algorithm for the Grassmann code $C(2, m)$ correcting up to $\dfrac{d-1}{2}$ errors. In \cite{BS}, a decoding algorithm for Grassmann codes $\Clm$ was proposed but unfortunately, the proposed algorithm can correct, asymptotically,  only up to $\lfloor(d-1)/2^{\ell+1}\rfloor$ errors. In other words, for $C(2, m)$ the proposed algorithm can correct around $\lfloor(d-1)/8\rfloor$ errors, which is far from the Grassmann code's error correcting capacity. Our decoding algorithm combines Reed--Solomon code decoding with information set decoding and the orbit structure of $C(2,m)$ to decode up to half the minimum distance. We begin this section defining a list decoder.

\begin{definition}
Let $C$ be an $[n,k,d]$ code over $\Fqm$. The code $C$ can be \emph{list decoded} correcting $\tau$ errors with list size $\eta$ if for any $y \in \Fqm^n$ and any $c \in C$ with $d_H(y,c) \leq \tau$ we can find a list $L \subseteq C$ of size at most $\eta$ containing $c$.
\end{definition}

Our decoding strategy is  to project a received word $r \in \Fq^{n}$, where $n=|\Gm|$, onto the different orbits $\Od$, decode the resulting projections and recover the original codeword from the positions in the projection. First,  we use Peterson's decoding algorithm to obtain a list of at most $q^m$ possible codewords on each of the orbits $\Od$. %We begin with the following  well known result for Reed-Solomon codes \cite{Peterson}.

	\begin{theorem}[Petersen's decoding algorithm]
		An $[n,k,d=n-k+1]$ Reed--Solomon code can decode $\lfloor \frac{n-k+1}{2}\rfloor$ errors with complexity $\mathcal{O}(n^3)$.	
	\end{theorem}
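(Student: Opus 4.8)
The plan is to verify the classical Peterson--Gorenstein--Zierler decoding procedure, whose three stages are syndrome computation, recovery of an error-locator polynomial by linear algebra, and recovery of the error positions and magnitudes. Throughout, realize the code as $C=\{(f(x_1),\dots,f(x_n)) : f\in\F[x],\ \deg f<k\}$ with $x_1,\dots,x_n$ distinct, and fix a parity-check matrix $H$ of size $(n-k)\times n$, which may be taken to be a generalized Vandermonde matrix. Given a received word $y=c+e$ with $c\in C$ and $\wt(e)\le t:=\lfloor\frac{n-k+1}{2}\rfloor$, I would first compute the syndrome vector $S=Hy^{\mathsf{T}}$; since $Hc^{\mathsf{T}}=\mathbf{0}$, it depends only on $e$, and the Vandermonde shape of $H$ gives $S_j=\sum_{\ell\in\supp(e)}e_\ell\,x_\ell^{\,j}$ (up to fixed nonzero column multipliers) for the available indices $j$.

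The heart of the argument is the \emph{key equation}. Set $\sigma(z)=\prod_{\ell\in\supp(e)}(1-x_\ell z)=1+\sigma_1z+\dots+\sigma_\nu z^\nu$ with $\nu=\wt(e)$, so that the reciprocals of the roots of $\sigma$ are precisely the error locators $x_\ell$. Multiplying $\sum_j S_jz^j$ by $\sigma(z)$ and substituting $S_j=\sum_\ell e_\ell x_\ell^{\,j}$ shows that $\sigma_1,\dots,\sigma_\nu$ satisfy the linear recurrence $S_r+\sigma_1S_{r-1}+\dots+\sigma_\nu S_{r-\nu}=0$ over the available range. Guessing a candidate number of errors $\mu$ turns this into a square linear system $M_\mu\,(\sigma_1,\dots,\sigma_\mu)^{\mathsf{T}}=-(S_{\mu+1},\dots,S_{2\mu})^{\mathsf{T}}$, where $M_\mu=(S_{i+j-1})_{1\le i,j\le\mu}$ is Hankel in the syndromes. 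I expect the \textbf{main obstacle} to be the structural fact that $M_\mu$ is nonsingular when $\mu=\nu$ and singular when $\mu>\nu$; I would prove it via the factorization $M_\mu=VDV^{\mathsf{T}}$, where $V$ is a Vandermonde matrix in the true error locators $x_\ell$ and $D$ is diagonal with nonzero entries $e_\ell x_\ell$, so that distinctness of the $x_\ell$ forces $\operatorname{rank}M_\mu=\min(\mu,\nu)$. Hence running Gaussian elimination on $M_t,M_{t-1},\dots$ until a nonsingular block appears both certifies the true $\nu$ and determines $\sigma$ uniquely; and one checks that the finitely many syndromes this requires are precisely those available for a Reed--Solomon code of minimum distance $d=n-k+1$, so every error pattern of weight at most $t$ is reached.

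Given $\sigma$, I would locate the errors by a Chien search, evaluating $\sigma$ at $x_i^{-1}$ for $1\le i\le n$ and reading $\supp(e)$ off from the zeros; the magnitudes $e_\ell$ then come from solving the square, invertible Vandermonde system $\sum_{\ell\in\supp(e)}e_\ell x_\ell^{\,j}=S_j$, or equivalently from Forney's formula relating the $e_\ell$ to $\sigma'$ and an error-evaluator polynomial built from $\sigma$ and the syndromes. Subtracting the reconstructed $e$ from $y$ produces a codeword, and it is the unique codeword within Hamming distance $t$ of $y$, since two distinct codewords are at distance at least $d=n-k+1>2t$; thus the output equals $c$. For the complexity, the dominant operations are the Gaussian eliminations on the $\mathcal{O}(n)\times\mathcal{O}(n)$ matrices occurring in the locator and error-magnitude systems, which cost $\mathcal{O}(n^3)$ field operations, whereas forming the $n-k$ syndromes and performing the Chien search each cost $\mathcal{O}(n^2)$; the total is $\mathcal{O}(n^3)$, as claimed.
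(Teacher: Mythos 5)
The paper offers no proof of this theorem: it is quoted as a classical black-box result (Peterson's 1960 algorithm, in the form refined by Gorenstein and Zierler), so there is nothing in the paper to compare your argument against. What you have written is the standard Peterson--Gorenstein--Zierler derivation --- syndromes $S_j=\sum_{\ell}e_\ell x_\ell^{\,j}$, the key equation, the Hankel matrix $M_\mu=(S_{i+j-1})$ whose rank is pinned down by the factorization $M_\mu=VDV^{\mathsf T}$ with $V$ Vandermonde in the error locators, then Chien search and the Vandermonde/Forney step --- and each of those steps is sound (granting the usual convention that the evaluation points are nonzero, as they are for the codes used later in the paper).

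Two caveats. First, the radius: the statement's $t=\lfloor\tfrac{n-k+1}{2}\rfloor=\lfloor d/2\rfloor$ exceeds the unique-decoding radius $\lfloor\tfrac{d-1}{2}\rfloor=\lfloor\tfrac{n-k}{2}\rfloor$ precisely when $n-k$ is odd. In that case your concluding inequality $d=n-k+1>2t$ is false (one has $2t=d$), and the system for $\mu=t$ requires syndromes $S_1,\dots,S_{2t}$ while only $n-k=2t-1$ are available; indeed no unique decoder can correct $\lfloor d/2\rfloor$ errors when $d$ is even. Your argument genuinely establishes the theorem with $\lfloor\tfrac{n-k}{2}\rfloor$ in place of $\lfloor\tfrac{n-k+1}{2}\rfloor$ --- which is all the rest of the paper uses --- but you should not assert $d>2t$ without the parity hypothesis; the discrepancy is inherited from the statement and deserves an explicit remark. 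Second, a minor point on complexity: running a fresh Gaussian elimination on each of $M_t,M_{t-1},\dots$ until a nonsingular one appears costs $\mathcal{O}(t^4)$ in the worst case (when the true number of errors $\nu$ is small). To stay within $\mathcal{O}(n^3)$, observe instead that a single elimination of $M_t$ already yields $\nu=\operatorname{rank}M_t$, and that the leading $\nu\times\nu$ principal block is nonsingular by the same $VDV^{\mathsf T}$ factorization, so only one further $\nu\times\nu$ solve is needed.
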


	Now, we will give a decoding algorithm for the code $\Cd$. But before that we give the following remark:
	\begin{remark}
		\label{rmk:power} As earlier, let $\Od$ be an orbit in $\Gm$ with cardinality $|\Od|=N$, where $N=\frac{q^m-1}{q-1}$. We have seen that we may think the points of $\Od$ as $\gamma^i$ for some $i$, as $\gamma^i$ represents point $\langle \gamma^i, \gamma^i\delta\rangle_{\Fq}$ of $\Od$. Under this identification, $\Od=\{1, \gamma, \ldots, \gamma^{N-1}\}$. Without loss of generality we may assume the coordinates of $\Cd$ are indexed on the set $\Od$ in this fixed order. 
	\end{remark}
%	We also mention that a list decoder is an algorithm which outputs a list of possible codewords. A list decoder corrects $\tau$ errors if any codeword at distance $\tau$ or less from the received word will be in the list of possible codewords.

	\begin{lemma}
		\label{lem:orbitdecoder}
		We can decode up to $t=\frac{N-(q^{m-1}+q^{m-3}-q-1)}{2}$ errors for the code  $\Cd$ with list size $q^m$ and  complexity $\mathcal{O}(q^m\frac{(q^m-1)^3}{(q-1)^3})$.
		
	\end{lemma}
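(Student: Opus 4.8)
The plan is to turn the decoding of $\Cd$ into $q^m$ instances of Reed--Solomon decoding, each solved by Petersen's algorithm, after exploiting the very restricted shape of the exponents appearing in $L_\delta(\Fq)$ to enlarge the radius beyond the naive half‑minimum‑distance bound. Recall from Lemmas~\ref{lemma:deg} and~\ref{lemma:dim} that, identifying the points of $\Od$ with $1,\gamma,\dots,\gamma^{N-1}$ as in Remark~\ref{rmk:power}, every codeword of $\Cd$ is the evaluation at these $N$ points of a polynomial
\[
g(T)=\sum_{\substack{0\le i<j\le m-1\\ d\nmid j-i}} c_{ij}\,T^{q^i+q^j},
\]
and that every such $g$ is divisible by $T^{q+1}$, with all exponents lying in $[\,q+1,\ q^{m-1}+q^{m-2}\,]$. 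First I would record the elementary exponent‑gap observation: the only admissible exponent $q^i+q^j$ exceeding $q^{m-1}+q^{m-3}$ is $q^{m-1}+q^{m-2}$ itself, since $j\le m-2$ forces $q^i+q^j\le q^{m-3}+q^{m-2}<q^{m-1}$, while $j=m-1$ together with $q^i+q^{m-1}>q^{m-1}+q^{m-3}$ forces $i=m-2$. Consequently, as soon as the single coefficient $c_{m-2,m-1}$ is known, the polynomial $g(T)-c_{m-2,m-1}T^{q^{m-1}+q^{m-2}}$ has degree at most $q^{m-1}+q^{m-3}$ and is still divisible by $T^{q+1}$.

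The algorithm then loops over all $q^m$ possible values $c\in\Fqm$ for $c_{m-2,m-1}$. For each $c$, given the received word $r$, it sets $r'_a:=r_a-c\,\gamma^{a(q^{m-1}+q^{m-2})}$ and $r''_a:=\gamma^{-a(q+1)}\,r'_a$ for $0\le a\le N-1$; both are invertible diagonal transformations and hence preserve the set of error positions. When $c$ is the correct coefficient, $r''$ agrees in all but at most $t$ positions with the evaluation of
\[
h(T):=\bigl(g(T)-c\,T^{q^{m-1}+q^{m-2}}\bigr)/T^{q+1},\qquad \deg h\le q^{m-1}+q^{m-3}-q-1,
\]
so $r''$ is a corrupted codeword of the Reed--Solomon code $\bigl[\,N,\ q^{m-1}+q^{m-3}-q,\ N-(q^{m-1}+q^{m-3}-q-1)\,\bigr]$ on the points $1,\gamma,\dots,\gamma^{N-1}$. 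By the Petersen decoding theorem this code corrects $\bigl\lfloor\tfrac{N-(q^{m-1}+q^{m-3}-q-1)}{2}\bigr\rfloor=t$ errors in time $\mathcal{O}(N^3)$, returning $h$; undoing the two transformations gives a candidate $\hat g:=c\,T^{q^{m-1}+q^{m-2}}+T^{q+1}\hat h$, which we add to the output list precisely when $\hat g\in L_\delta(\Fq)$ and $\Ev(\hat g)$ lies within Hamming distance $t$ of $r$ (this last check guarantees the list is contained in $\Cd$).

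For correctness, when $c$ equals the true value of $c_{m-2,m-1}$ the procedure outputs the transmitted codeword, so it lies in the list; since at most one candidate is produced per value of $c$, the list has size at most $q^m$. The cost is $q^m$ runs of an $\mathcal{O}(N^3)$ Petersen decode with $N=\tfrac{q^m-1}{q-1}$, that is $\mathcal{O}\!\bigl(q^m\tfrac{(q^m-1)^3}{(q-1)^3}\bigr)$, as claimed. The part requiring care is the parameter matching in the middle: one must verify the exponent‑gap claim exactly, so that deleting the single monomial $T^{q^{m-1}+q^{m-2}}$ really drops the degree to $q^{m-1}+q^{m-3}$, and then check that after dividing out $T^{q+1}$ the resulting Reed--Solomon code has designed distance $N-(q^{m-1}+q^{m-3}-q-1)$, so that its half‑distance radius is precisely $t$. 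The diagonal equivalences preserving error weights and the list bookkeeping are then routine.
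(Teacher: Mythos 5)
Your proposal is correct and follows essentially the same route as the paper's proof: guess the single coefficient of the top monomial $T^{q^{m-1}+q^{m-2}}$ over all $q^m$ choices, strip it off together with the common factor $T^{q+1}$ via weight-preserving diagonal transformations, and run Peterson's decoder on the resulting $[N,\,q^{m-1}+q^{m-3}-q]$ Reed--Solomon code, yielding list size $q^m$ and cost $\mathcal{O}(q^m N^3)$. Your explicit verification of the exponent gap (that $q^{m-2}+q^{m-1}$ is the only admissible exponent exceeding $q^{m-1}+q^{m-3}$) is a welcome addition, as the paper only asserts this degree bound.
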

	\begin{proof}

		The proof is a little technical. Let $c = (c_0, c_1, \ldots, c_{N-1})\in {\Cd}$ be a transmitted codeword and let	$r = (r_0, r_1, \ldots, r_{N-1})$ be the received codeword with error vector $e = r-c$ with $\wt(e)\le t$. By decoding $c$ from $r$, we mean to find a polynomial function $f(T)\in L_{\delta}(\Fq)$ such that $ev_{\Od}(f)  = c$. Recall  from  Lemma \ref{lemma:deg} and Remark \ref{rmk:coeeficients},  that $f(T)$ can be written as 
		$$
		f(T)=a_1T^{q+1}+\dots +a_{k-1}T^{q^{m-1}+q^{m-3}} +a_{k}T^{q^{m-1}+q^{m-2}}
		$$
		where $k=\dim L_\delta(\Fq)$ and $a_i\in\Fqm$.  Instead of decoding $c$ from $r$, we shall decode $\widehat{c} = (c_0, \frac{c_1}{\gamma^{q+1}}, \ldots , \frac{c_{N-1}}{\gamma^{(q+1)(N-1)}})$  from $\widehat{r} = (r_0, \frac{r_1}{\gamma^{q+1}}, \ldots , \frac{r_{N-1}}{\gamma^{(q+1)(N-1)}})$, i.e., we try to find the polynomial $g(T)=f(T)/ T^{q+1}$ such that $ev_{\Od}(g)  =\widehat{c}$. Note that $\wt(\widehat{r}-\widehat{c})= \wt(r-c)\le t $. Also, $ev_{\Od}(g)$ is a codeword of a $RS_{q^m}(\Od, q^{m-1}+q^{m-2}-q)$ Reed--Solomon code. This Reed--Solomon code can not decode $t$ errors. Thus we must adapt Peterson's decoder. 

The polynomials $f$ and $g$ are sparse. The second highest term in $g(T)$ has degree at most $q^{m-1}+q^{m-3}-q-1$. That is: $$deg g(T) - a_{k}T^{q^{m-1}+q^{m-2}-q-1}  \leq q^{m-1}+q^{m-3}-q-1.$$ Instead of decoding $\widehat{r}$ as a codeword from $RS_{q^m}(\Od, q^{m-1}+q^{m-2}-q)$ we shall decode all $q^m$ possibilities $\widehat{r} - ev_{\Od}(b T^{q^{m-1}+q^{m-2}-q-1})$ for every $b\in \Fqm$ as codewords from $RS_{q^m}(\Od, q^{m-1}+q^{m-3}-q)$. Decoding is not guaranteed to work for most of the values of $b$. However, the codeword $\widehat{r} - ev_{\Od}(a_{k} T^{q^{m-1}+q^{m-2}-q-1})$ is contained in the smaller Reed--Solomon code and we can recover $ \widehat{g(T)} =a_1+\dots +a_{k-1}T^{q^{m-1}+q^{m-3}-q-1}$ from $\widehat{r} - ev_{\Od}(a_{k} T^{q^{m-1}+q^{m-2}-q-1})$ because few errors ocurred. 

Our list decoder works as follows.  For each $b \in \F_{q^m}$  attempt to decode $\widehat{r} - ev_{\Od}(bT^{q^{m-1}+q^{m-2}-q-1})$ as a codeword from a $RS_{q^m}(\Od, q^{m-1}+q^{m-3}-q)$ Reed--Solomon code. If this decoding attempt is successful, then we find some $ \widehat{g_b(T)} =b_1+\dots +b_{k-1}T^{q^{m-1}+q^{m-3}-q-1}$. In this case we add $ev_{\Od}(g_b(T) + bT^{q^{m-1}+q^{m-2}-q-1})$ to the list of possible codewords. Because less than $\frac{N-(q^{m-1}+q^{m-3}-q-1)}{2}$ errors occurred the codeword  $\widehat{c} - ev_{\Od} (a_k T^{q^{m-1}+q^{m-2}-q-1})$ is obtained when decoding with $b = a_k$. Therefore we add $ev_{\Od}(g(T))$ to the list of possible codewords. 		\end{proof}

	Our decoding algorithm uses a combination of Reed--Solomon decoding and Information set decoding. 

\begin{definition}[Information set decoding]

Let $C$ be an $[n,k,d]$ code. Let $\mathcal{I}$ be a collection of information sets of $C$. We can decode up to $t$ errors with $\mathcal{I}$ if for any set $T$ of $t$ positions there exists an information set $I \in \mathcal{I}$ such that $T \cap I = \emptyset$.

\end{definition}
The way information set decoding works is by taking a received word $r$ and all the information sets in $\mathcal{I}$. For each $I \in \mathcal{I}$ one encodes the projection $r^I$ as a codeword of $C$. If $t$ errors or less ocurred, there exists an information set with no errors. The codeword corresponding to that set of positions will be at distance $\leq t$ from our received word. In the next lemma we give a bound for number of elements in a field extension not lying in any proper subfield of the extension field. This bound is needed to count the orbits with contain an information set for the Grassmann code $C(2,m)$. These orbits are crucial for our decoder.

	\begin{lemma}
		\label{lemma: bound} 
		Let $\F_{q^m}$ be a finite field with $q^m$ elements. If $m \geq 3$, then there are at least $q^{m}- q^{m-2}$ elements in $\Fqm$ not lying in any subfield of $\Fqm$ containing $\Fq$. 
		
	\end{lemma}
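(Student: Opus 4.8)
The plan is to use inclusion-exclusion over the lattice of subfields of $\Fqm$ that properly contain $\Fq$, and then crudely bound the union. First I would note that an element $x \in \Fqm$ lies in a proper subfield containing $\Fq$ if and only if $x \in \F_{q^e}$ for some divisor $e$ of $m$ with $1 \le e < m$; since every such $\F_{q^e}$ is contained in $\F_{q^{m/p}}$ for some prime $p \mid m$, the set of ``bad'' elements is exactly $\bigcup_{p \mid m,\ p \text{ prime}} \F_{q^{m/p}}$. Hence the number of bad elements is at most $\sum_{p \mid m} q^{m/p}$, where the sum is over the distinct prime divisors of $m$.

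Next I would bound this sum. The largest term corresponds to the smallest prime divisor $p_0$ of $m$, giving $q^{m/p_0} \le q^{m/2}$, and more relevant here, the dominant bad subfield is $\F_{q^{m/p_0}}$; when $m$ is even this is $\F_{q^{m/2}}$, and when $m$ has a smallest prime factor $p_0 \ge 3$ it is at most $\F_{q^{m/3}}$. In either case, for $m \ge 3$ one checks that $\sum_{p \mid m} q^{m/p} \le q^{m-2}$: the number of distinct prime divisors of $m$ is at most $\log_2 m$, each term is at most $q^{\lfloor m/2 \rfloor} \le q^{m-2}$ once we are careful about small cases (the only worry is $m$ even with $m/2 > m-2$, i.e. $m < 4$, so $m=3$ forces $m/p \le 1$ and the bound is trivial, while $m=4$ gives $q^{m/2}=q^2=q^{m-2}$ exactly and no other prime divisors contribute). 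A short case analysis separating $m=3,4$ from $m \ge 5$ handles this cleanly: for $m \ge 5$ every $q^{m/p} \le q^{m/2} \le q^{m-3}$ unless $p=2$, and the single $p=2$ term is $q^{m/2} \le q^{m-2}$ with room to spare for the remaining terms.

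Therefore the number of elements of $\Fqm$ not lying in any subfield containing $\Fq$ is at least $q^m - \sum_{p \mid m} q^{m/p} \ge q^m - q^{m-2}$, as claimed.

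The main obstacle is purely bookkeeping: making the estimate $\sum_{p \mid m} q^{m/p} \le q^{m-2}$ airtight across all $m \ge 3$ and all prime powers $q \ge 2$, since the case $m=4$ (and to a lesser extent $m=6$, where $q^{m/2}+q^{m/3} = q^3+q^2 \le q^4$ needs $q \ge 2$, which holds) is tight or nearly tight. I expect the cleanest writeup dispatches $m=3$ and $m=4$ by hand and then argues that for $m \ge 5$ the single largest bad subfield $\F_{q^{m/2}}$ already has size $\le q^{m-2}/2$ worth of slack to absorb the other (geometrically smaller) terms. No step requires anything beyond elementary counting of subfield inclusions.
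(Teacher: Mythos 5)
Your proof is correct and follows essentially the same route as the paper: bound the ``bad'' set by the union of the maximal subfields $\F_{q^{m/p}}$ over the prime divisors $p$ of $m$, estimate $\sum_{p\mid m} q^{m/p}$ via the smallest prime divisor, and dispatch the tight cases $m=3,4$ by hand. The only difference is cosmetic bookkeeping in how the sum is bounded (the paper uses $\sum_i q^{m/p_i} < q^{m/p_1+1}$ for $m\ge 5$), so there is nothing substantive to add.
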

	\begin{proof}
		We first assume that $m>4$. Let $p_1,\ldots, p_r$ be all distinct primes dividing $m$. 	Then for each $1\le i\le r$, there is a unique field of degree $q^{\frac{m}{p_i}}$ and this is a maximal subfield of $\Fqm$. Therefore there are at least $q^m-\sum_{i}q^{\frac{m}{p_i}}$ elements not lying in any subfield of $\Fqm$. Let $p=p_1$ be the smallest prime dividing $m$. Then there are at least 
		\[
		q^m -\sum q^{\frac{m}{p_i}} \ge q^m -q^{\frac{m}{p}+1}
		\]
		elements is the field $\Fqm$ that does not lie in any proper subfield of $\Fqm$. Since $m\ge 5$ and $p\ge 2$, we have $q^m -q^{\frac{m}{p}+1}\ge q^m -q^{m-2}$ . Clearly,  $q^m -q^{m-2}\ge q^{m-1}+ q^{m-2}$ which proves that there are at least $q^{m-1}+ q^{m-2}$ elements of the field $\Fqm$ that does not lie in any proper subfields of $\Fqm$.  On the other hand if $m=4$, then $2$ is the only prime dividing $4$ and there is a unique subfield of degree $2$ in $\mathbb{F}_{q^4}$, namely $\mathbb{F}_{q^2}$. Therefore, there are $q^4-q^2$ elements in the field $\mathbb{F}_{q^4}$ that do not lie in the subfield $\mathbb{F}_{q^2}$. But $q^4-q^2\ge q^3+q^2$.

For the case $m = 3$, as $m$ is prime, the only subfield of $\mathbb{F}_{q^3}$ containing $\mathbb{F}_q$ is $\mathbb{F}_q$  itself. Therefore there are $q^3-q = q^{m}-q^{m-2}$ elements in $\Fqm$ not lying in any subfield of $\Fqm$ containing $\Fq$. 

 This completes the proof of the lemma.
	\end{proof}	
	%There are $\sum\limits_{d | m} \mu(\frac{m}{d}) q^{\frac{m}{d}}$ elements not contained in a proper subfield of $\F_{q^m}$.
	%
	%If $p_1, p_2, \ldots, p_s$ are the prime divisors of $m$, then there are at least $q^m -\sum q^{\frac{m}{p_i}}$ elements not contained in a proper subfield of $\F_{q^m}$.
	%
	%If $p$ is the smallest prime dividing $m$, then there are at least $q^{m} - q^{\frac{m}{p}+1}$  elements not contained in a proper subfield of $\F_{q^m}$.

In this lemma we use the Pigeonhole principle to prove that if few errors ocurr, then there exists an orbit of $C(2,m)$ with an information set containing few errors.
	
	\begin{lemma}\label{lem:errorsetorbits}
Let $E \subseteq \G$. If $|E| \leq \lfloor(d-1)/2\rfloor$ then there exists an orbit $\Od$ with an information set such that $|\Od \cap E| < \frac{N-(q^{m-1}+q^{m-3}-q-1)}{2}$ 
	\end{lemma}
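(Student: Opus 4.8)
The statement asks us to find, among the orbits of $\Gm$, at least one orbit $\Od$ that simultaneously (i) carries an information set of $C(2,m)$ (equivalently, by Corollary \ref{cor:projdim}, is an orbit of full size $N = \frac{q^m-1}{q-1}$ with $\delta$ not lying in any proper subfield of $\Fqm$, so that $\dim C^\delta(2,m) = \dim C(2,m)$), and (ii) meets the error set $E$ in fewer than $\frac{N-(q^{m-1}+q^{m-3}-q-1)}{2}$ points, which is exactly the decoding radius of Lemma \ref{lem:orbitdecoder}. The strategy is a counting/pigeonhole argument: there are many ``good'' orbits (those satisfying (i)), they partition most of $\Gm$, and the error set is too small to intersect all of them heavily.

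\textbf{Step 1: count the good orbits.} First I would invoke Lemma \ref{lemma: bound}: at least $q^m - q^{m-2}$ elements $\delta \in \Fqm$ do not lie in any proper subfield containing $\Fq$. For each such $\delta$, Lemma \ref{lemma:cyclefix} (together with Lemmas \ref{lemma:oddcycle}/\ref{lemma:evencycle}) shows the orbit $O_\delta$ has full size $N$, and Corollary \ref{cor:projdim} (with $d = m$) shows $C^\delta(2,m)$ has the same dimension as $C(2,m)$, hence carries an information set of $C(2,m)$ supported inside $\Od$. Since each orbit $\langle 1,\delta\rangle_{\Fq}$ is represented by $q-1$ distinct values $a\delta$ (or more precisely, the $N = \frac{q^m-1}{q-1}$ points of $\Od$ correspond to the elements of a $2$-dimensional $\Fq$-space $\langle 1,\delta\rangle$ with the zero removed, and these account for roughly $\frac{q^2-q}{q-1} = q$ many $\delta$'s that would name the same orbit... ) — here I would translate the ``at least $q^m-q^{m-2}$ good $\delta$'' into ``at least $\frac{q^m-q^{m-2}}{q}$ good orbits of full size $N$'' (each line $\langle 1,\delta\rangle$ through $1$ contains $q-1$ scalar multiples of $\delta$ plus $\delta+a$ for $a\in\Fq$, giving at most $q$ representatives $\delta$; a slightly careful bookkeeping gives at least $\frac{q^{m-2}(q^2-1)}{q} \geq q^{m-3}(q^2-1)$ good orbits, and in any case a number that grows like $q^{m-1}$).

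\textbf{Step 2: pigeonhole on $E$.} Let $G$ denote the set of good orbits, all of size exactly $N$, and suppose for contradiction that every $\Od \in G$ satisfies $|\Od \cap E| \geq \frac{N-(q^{m-1}+q^{m-3}-q-1)}{2} =: t_0$. Because distinct orbits are disjoint, summing gives $|E| \geq |G|\cdot t_0$. Now I would plug in $|E| \leq \lfloor (d-1)/2\rfloor = \lfloor (q^{2(m-2)}-1)/2\rfloor$, the lower bound $|G| \gtrsim q^{m-3}(q^2-1)$ from Step 1, and the value $t_0 = \frac{1}{2}\bigl(\frac{q^m-1}{q-1} - q^{m-1} - q^{m-3} + q + 1\bigr)$. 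A direct estimate shows $N - (q^{m-1}+q^{m-3}-q-1) = \frac{q^m-1}{q-1} - q^{m-1} - q^{m-3}+q+1$ is positive and of order $q^{m-2}$ (the dominant cancellation is $\frac{q^m-1}{q-1} - q^{m-1} \approx q^{m-2}$), so $t_0$ is of order $q^{m-2}/2$, whence $|G|\cdot t_0$ is of order $q^{m-3}\cdot q^2 \cdot q^{m-2} = q^{2m-3}$, which dwarfs $|E| \leq q^{2m-4}/2$. This contradiction proves the lemma.

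\textbf{Main obstacle.} The delicate part is \emph{Step 1}: correctly converting the count of field elements outside proper subfields (Lemma \ref{lemma: bound}) into a count of \emph{orbits} of full size. One must be careful that (a) a single orbit $\Od$ of size $N$ corresponds to several elements $\delta'$ with $O_{\delta'} = O_\delta$ — namely the $q-1$ nonzero scalar multiples of $\delta$ all give the same point $\langle 1,\delta\rangle$ only after projectivization, while the $2$-space $\langle 1,\delta\rangle$ is named by any $\delta' = (a\delta+b)/(c\delta+e)$ with $\bigl(\begin{smallmatrix} a & b \\ c & e\end{smallmatrix}\bigr)$ invertible over $\Fq$ — so the multiplicity is bounded by $|\mathrm{GL}_2(\Fq)|/|(\text{stabilizer})|$, a constant depending only on $q$; and (b) ``$\delta$ in no proper subfield'' is exactly the condition ensuring $d=m$ in Corollary \ref{cor:projdim}, hence full dimension. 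Once the bookkeeping constant is pinned down (it is at most $q(q-1)^2$ or so, crudely $\leq q^3$), the final inequality $|G|\cdot t_0 > \lfloor(d-1)/2\rfloor$ is a routine comparison of powers of $q$ that holds comfortably for all $m \geq 3$, and I would simply verify the small cases $m=3,4$ by hand if the asymptotic bound is tight there.
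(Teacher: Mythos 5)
Your overall strategy---use Lemma \ref{lemma: bound} and Corollary \ref{cor:projdim} to produce many orbits carrying an information set, then pigeonhole the error set $E$ over these disjoint orbits---is exactly the strategy of the paper's proof, and your identification of ``$\delta$ in no proper subfield $\Rightarrow$ full dimension $\Rightarrow$ information set'' is correct. However, the step you yourself flag as the ``main obstacle'' (converting the count of good field elements into a count of good \emph{orbits}) is not a routine bookkeeping detail that can be deferred: it is where the argument lives or dies, and your treatment of it does not close. Your candidate counts for $|G|$ (namely $\frac{q^m-q^{m-2}}{q}$ or $q^{m-3}(q^2-1)$, both of order $q^{m-1}$) already exceed the \emph{total} number of orbits, which by Lemmas \ref{lemma:oddcycle} and \ref{lemma:evencycle} is $\frac{q^{m-1}-1}{q^2-1}$ (resp.\ $q\frac{q^{m-2}-1}{q^2-1}+1$), i.e.\ of order $q^{m-3}$. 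The correct multiplicity is as follows: a full-size orbit has stabilizer $\Fq^*$, so it contains exactly $\frac{q^2-1}{q-1}=q+1$ planes through the vector $1$, and each such plane $\langle 1,\delta\rangle$ is named by the $q^2-q$ elements of $\langle 1,\delta\rangle\setminus\Fq$; hence each good orbit absorbs up to $(q+1)(q^2-q)=q^3-q$ good elements $\delta$, giving $|G|\gtrsim \frac{q^m-q^{m-2}}{q^3-q}\approx q^{m-3}$, consistent with the total orbit count.

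With the correct $|G|\approx q^{m-3}$ and $t_0=\frac{N-(q^{m-1}+q^{m-3}-q-1)}{2}\approx \frac{q^{m-2}}{2}$, your pigeonhole product is $|G|\cdot t_0\approx \frac{q^{2m-5}}{2}$, which is \emph{smaller} than the hypothesis $|E|\le\lfloor(d-1)/2\rfloor\approx\frac{q^{2m-4}}{2}$ by a factor of about $q$. So the contradiction you aim for does not materialize, and the assertion that ``the final inequality holds comfortably for all $m\ge 3$'' is false---the comfortable margin in your Step 2 is an artifact of an orbit count inflated by roughly $q^2$. (For what it is worth, the paper's own proof divides by $q^2-q$ to obtain $q^{m-2}+q^{m-3}$ ``orbits''; that quantity is really the number of representative planes $\langle 1,\delta\rangle$ rather than of orbits, and it is against that larger figure that the displayed inequality $\frac{|E|}{t_0}<q^{m-2}+q^{m-3}$ is tested. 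The comparison there is razor-thin, not comfortable, so any loss of a factor of $q+1$ in the orbit count is fatal to this line of argument.) To repair your proof you would need either a genuinely larger supply of disjoint information-set-bearing index sets (e.g.\ working with the $q^{m-2}+q^{m-3}$ planes $\langle 1,\delta\rangle$ directly, if each can be shown to carry an information set and if they can be made disjoint enough), or a larger per-orbit decoding radius $t_0$; as written, the pigeonhole over orbits is quantitatively insufficient.
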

	\begin{proof}
			
		From From Corollary \ref{cor:projdim}, we know that if $\delta \in \Fqms$ but $\delta$ does not lie in any proper subfield of $\Fqm$, then the corresponding code $\Cd$ is of dimension ${m\choose 2}$. From Lemma \ref{lemma: bound}, we know that there are at least $q^{m}- q^{m-2}$ many elements in $\Fqms$ that does not lie in any proper subfield of $\Fqm$. These elements will correspond to $\frac{q^{m} - q^{m-2}}{q^2-q} = q^{m-2}+q^{m-3}$ orbits. In other words, there are at least $q^{m-2}+ q^{m-3}$ orbits $\Od$ with an information set of $C(2, m)$.

As $|E| \leq \lfloor\frac{d-1}{2} \rfloor$, where $d = q^{2(m-2)}$ note that
		\[ \frac{|E|}{\frac{N-(q^{m-1}+q^{m-3}-q-1)}{2}} \leq
		\lfloor\frac{\frac{q^{2(m-2)}-1}{2}}{\frac{N-(q^{m-1}+q^{m-3}-q-1)}{2}}\rfloor < q^{m-2} + q^{m-3}
		\]
		there exists one orbit $\Od$	 that contains no more than $\frac{N-(q^{m-1}+q^{m-3}-q-1)}{2}$ elements in $E$. \end{proof}

	Now we are ready to prove the main result of this article. In the next theorem, we give a decoding algorithm for the Grassmann code $C(2, m)$ which decodes up to $\lfloor (d-1)/2 \rfloor$ errors. Our algorithm partitions the codeword of $C(2,m)$ onto the different orbits $\Od$, then decodes each the projection on each orbit using the list decoder of Lemma \ref{lem:orbitdecoder}. Note that for each orbit we obtain at most $q^m$ possible projections on each orbit. We then try to recover the original codeword from each of the $q^m$ possibilities on each of the orbits. If less than $\lfloor(d-1)/2\rfloor$ errors occur, then we will find our original codeword on this list and it will be the closest codeword to the received word.

\begin{algorithm}[Orbit Projection decoder for $C(2,m)$].
  \\ 
\begin{itemize}
\item Input: $r  = c + e$ where $c \in C(2,m)$.
\item Initially let $L$ be an empty list.
\item For each orbit $\Od$ check if the orbit contains an information set.
\item If so, project $r$ onto the orbit and apply the list decoder of Lemma \ref{lem:orbitdecoder}.
\item Use an information set in $\Od$ to encode a codeword from $C(2,m)$.
\item Add this codeword to $L$.
\item Return $L$.
\end{itemize}

\end{algorithm}

	\begin{theorem}
		Let $m\ge 4$ and let  $C(2, m)$ be the corresponding Grassmann code. Using the projection of $C(2, m)$ onto some orbit $\Od$  with $|\Od|=N$ then the algoritm described corrects up to $\lfloor(d-1)/2\rfloor$ errors for Grassmann code $C(2, m)$ where $d= q^{2(m-2)}$.
	\end{theorem}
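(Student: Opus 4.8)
The theorem collects together the machinery built up in the preceding lemmas, so the proof is essentially a verification that the pieces interlock. The plan is to show three things in sequence: (1) that the Orbit Projection decoder terminates with a list $L$ that contains the transmitted codeword whenever $\wt(e) \le \lfloor(d-1)/2\rfloor$; (2) that the transmitted codeword is the unique closest codeword to $r$ on this list, so a final nearest-codeword step picks it out correctly; and (3) that the whole procedure runs in the claimed (polynomial in $q^m$) complexity.

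\medskip

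First I would fix $r = c + e$ with $c \in C(2,m)$ and $\wt(e) \le \lfloor(d-1)/2\rfloor$, and write $E = \supp(e) \subseteq \G$ for the set of error positions. By Lemma~\ref{lem:errorsetorbits}, there is at least one orbit $\Od$ that carries an information set for $C(2,m)$ and satisfies $|\Od \cap E| < \tfrac{N - (q^{m-1}+q^{m-3}-q-1)}{2}$. On that orbit, the projection $r^{\Od}$ is a received word for the code $\Cd$ with at most $t = \tfrac{N-(q^{m-1}+q^{m-3}-q-1)}{2}$ errors, so by Lemma~\ref{lem:orbitdecoder} the list decoder applied to $r^{\Od}$ returns a list of at most $q^m$ codewords of $\Cd$ that contains $c^{\Od}$, the true projection. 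Since $\Od$ contains an information set $I$ of $C(2,m)$ (here Corollary~\ref{cor:projdim} is what guarantees the projection onto such a $\delta$ does not lose information — for $\delta$ not in a proper subfield the dimension of $\Cd$ equals $\dim C(2,m) = \binom{m}{2}$), and $c^{\Od}$ is in the decoded list, the re-encoding step "use $I$ to lift to a codeword of $C(2,m)$" applied to $c^{\Od}$ recovers $c$ exactly. Hence $c \in L$. (One should remark that the algorithm as stated loops over \emph{all} orbits carrying an information set and all $q^m$ list entries per orbit, so it certainly reaches this good pair $(\Od, c^{\Od})$; the other entries merely add spurious codewords to $L$, which is harmless.)

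\medskip

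Second, I would argue the selection step. Every codeword added to $L$ comes from re-encoding some projection via an information set, so it is a genuine codeword of $C(2,m)$; and $c \in L$ with $d_H(r,c) = \wt(e) \le \lfloor(d-1)/2\rfloor$. Because $C(2,m)$ has minimum distance $d = q^{2(m-2)}$ (equation~\eqref{eq:parameters} with $\ell = 2$), no other codeword can lie within distance $\lfloor(d-1)/2\rfloor$ of $r$, so $c$ is the unique element of $L$ minimizing $d_H(r,\cdot)$. Returning the nearest codeword on $L$ therefore returns $c$. For the complexity: there are $O(q^{m-2})$ orbits to process (Lemma~\ref{lemma: bound} and the counting in Lemma~\ref{lem:errorsetorbits} bound the number of orbits with information sets, and the total number of orbits is $O(q^{m-2})$ by Lemmas~\ref{lemma:oddcycle}--\ref{lemma:evencycle}), each costs $\mathcal{O}\!\left(q^m \tfrac{(q^m-1)^3}{(q-1)^3}\right)$ by Lemma~\ref{lem:orbitdecoder}, plus $q^m$ re-encodings of cost polynomial in $n = |\G|$, and finally $|L| = O(q^{m-2}\cdot q^m)$ distance computations of cost $O(n)$; all of this is polynomial in $q^m$.

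\medskip

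\textbf{The main obstacle.} The delicate point — and the step I would spend the most care on — is the numerical inequality underpinning Lemma~\ref{lem:errorsetorbits} and hence the whole argument: one must check that $\lfloor(d-1)/2\rfloor = \lfloor (q^{2(m-2)}-1)/2\rfloor$ really is strictly smaller than (number of good orbits) $\times$ (per-orbit error tolerance) $= (q^{m-2}+q^{m-3}) \cdot \tfrac{N - (q^{m-1}+q^{m-3}-q-1)}{2}$, uniformly for $m \ge 4$ and all prime powers $q$. Since $N = \tfrac{q^m-1}{q-1} \approx q^{m-1}$, the per-orbit tolerance is roughly $\tfrac{1}{2}(q^{m-1} - q^{m-1} - q^{m-3} + \dots)$ — the leading $q^{m-1}$ terms nearly cancel, so one has to be genuinely careful about lower-order terms to see that the product is still $\Theta(q^{2m-4})$ and beats $q^{2(m-2)}/2$; the small cases $m = 4, 5$ and $q = 2$ are where the margin is tightest and should be checked by hand. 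The rest of the proof is bookkeeping, but this inequality is where the decoder's success genuinely hangs, so I would isolate it cleanly (as the excerpt does) and verify it with the leading-order cancellation made explicit.
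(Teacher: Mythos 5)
Your proposal is correct and follows essentially the same route as the paper's proof: invoke Lemma~\ref{lem:errorsetorbits} to find an orbit carrying an information set with few enough errors, apply the list decoder of Lemma~\ref{lem:orbitdecoder} on that orbit, and re-encode via the information set to place $c$ on the output list. The extra material you include (uniqueness of the nearest codeword on the list via $d = q^{2(m-2)}$, and the complexity accounting) goes slightly beyond what the paper writes down but does not change the argument.
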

	
	\begin{proof}  Input: $r  = c + e$ where $c \in C(2,m)$ where $wt(e) < \lfloor(d-1)/2\rfloor$. As less then $\lfloor(d-1)/2\rfloor$ errors ocurred, Lemma \ref{lem:errorsetorbits} implies there exists an orbit with an information set and less than $\frac{N-(q^{m-1}+q^{m-3}-q-1)}{2}$ errors. When the algorithm decodes $r$ restricted to this orbit,  Lemma \ref{lem:orbitdecoder} implies that the restriction of $c \in C(2,m)$ onto the orbit is in the returned list. The decoding algorithm then uses the information set in $\Od$ to recover the codeword $c \in C(2,m)$ from its projection onto $\Od$. Then the algorithm adds $c$ to the output list. Therefore, if less than$\lfloor(d-1)/2\rfloor$ errros occured, the correct codeword is on the list. \end{proof}

	\section{Acknowledgements}
	
	For the duration of this work, the first named author  was supported by the National Institute of General Medical Sciences of the National Institutes of Health, The United States of America under Award Number R25GM121270 . The content is solely the responsibility of the authors and does not necessarily represent the official views of the National Institutes of Health .The second named author would like to express his gratitude to the Indo-Norwegian project supported by Research Council of Norway (Project number 280731), and the DST of Govt. of India. 

	\clearpage

\end{document}